\def\presuper#1#2%
\DeclareMathOperator*{\argmax}{arg\,max}
\newcommand*\Let[2]{\State #1 $\gets$ #2}
\algrenewcommand\algorithmicrequire{\textbf{Initialisation:}}
\algrenewcommand\algorithmicensure{\textbf{Output:}}
 \journalname{}
\begin{document}

\title{Weight-Preserving Simulated Tempering 
}


\author{Nicholas G. Tawn \and Gareth O.\ Roberts \and Je{f}frey S.\ Rosenthal 
}


\institute{Nicholas G. Tawn  \at
              Department of Statistics \\
							University of Warwick \\
							United Kingdom \\
							CV4 7AL \\
              \email{n.tawn.1@warwick.ac.uk}           
           \and
              Gareth O.\ Roberts  \at
              Department of Statistics \\
							University of Warwick \\
							United Kingdom \\
							CV4 7AL \\
							Tel.: +44(0)24 7652 4631\\
              \email{Gareth.O.Roberts@warwick.ac.uk}  
						\and
						  Je{f}frey S.\ Rosenthal \at
								Department of Statistical Sciences \\
								University of Toronto \\
								100 St. George Street, Room 6018 \\
								Toronto, Ontario \\
								Canada M5S 3G3\\
								Tel.:(416) 978-4594 \\
              \email{Jeff@math.toronto.edu} 
}

\date{Received: date / Accepted: date}

\maketitle

\begin{abstract}

Simulated tempering is popular method of allowing MCMC algorithms to
move between modes of a multimodal target density $\pi$.  One problem
with simulated tempering for multimodal targets is that the weights of
the various modes change for different inverse-temperature values,
sometimes dramatically so.  In this paper, we provide a fix to overcome
this problem, by adjusting the mode weights to be preserved (i.e.,
constant) over different inverse-temperature settings.  We then apply
simulated tempering algorithms to multimodal targets using our mode
weight correction.  We present simulations in which our
weight-preserving algorithm mixes between modes much more successfully
than traditional tempering algorithms.  We also prove a diffusion limit
for an version of our algorithm, which shows that under appropriate
assumptions, our algorithm mixes in time $O(d [\log d]^2)$.
\keywords{Simulated Tempering \and Parallel Tempering \and MCMC \and Multimodality \ and Monte Carlo}
\end{abstract}


\section{Introduction}

Consider the problem of drawing samples from a target distribution,
$\pi(x)$ on a $d$-dimensional state space $\mathcal{X}$ where $\pi(\cdot)$
is only known up to a scaling constant. A popular approach is to use
Markov chain Monte Carlo (MCMC) which uses a Markov chain that is designed
in such a way that the invariant distribution of the chain is
$\pi(\cdot)$. 

However, if $\pi(\cdot)$ exhibits multimodality, then the majority of
MCMC algorithms which use tuned localised proposal mechanisms, e.g.\
\cite{roberts1997weak} and \cite{roberts2001optimal}, fail to explore the
state space, which leads to biased samples. Two approaches to overcome this
multimodality issue are the \textit{simulated} and \textit{parallel
tempering algorithms}. These methods augment the state space with auxiliary
target distributions that enable the chain to rapidly traverse the entire
state space.

The major problem with these auxiliary targets is that in general they don't preserve regional mass, see \cite{woodard2009conditions}, \cite{woodard2009sufficient} and \cite{Bhatnagar2016}. This problem can result in the required
run-time of the simulated and parallel tempering algorithms growing
exponentially with the dimensionality of the problem.

In this paper, we provide a fix to overcome this problem, by adjusting
the mode weights to be preserved (i.e., constant) over different
inverse-temperatures.  We apply our mode weight correction to produce
new simulated and parallel tempering algorithms for multimodal target
distributions.  We show that, assuming the chain mixes at the
hottest temperature, our mode-preserving algorithm will mix well for the
original target as well.

This paper is organised as follows. Section~\ref{sec:parallel}
reviews the simulated and parallel tempering algorithms and the existing
literature for their optimal setup. Section~\ref{sec:powertem}
describes the problems with modal weight preservation that are inherent
with the traditional approaches to tempering, and introduces a prototype
solution called the HAT algorithm that is similar to the parallel
tempering algorithm but uses novel auxiliary targets.
Section~\ref{sec:simstudy} presents some simulation studies of the new
algorithms. Section~\ref{sec:Theoreticalanalysis} provides a theoretical
analysis of a diffusion limit and the resulting computational
complexity of the HAT algorithm in high dimensions.
Section~\ref{sec:conc} concludes and provides a discussion of further work.

\section{Tempering Algorithms}
\label{sec:parallel}

There is an array of methodology available to overcome the issues of multimodality in MCMC,  the majority of which use state space augmentation e.g.\ \cite{Wang1990a}, \cite{geyer1991markov}, \cite{marinari1992simulated}, \cite{Neal1996}, \cite{kou2006discussion}, \cite{2017arXiv170805239N}. Auxiliary distributions that allow a Markov chain to explore the entirety of the state space are targeted, and their mixing information is then passed on to aid inter-modal mixing in the desired target. A convenient approach for augmentation methods,
such as the popular simulated tempering (ST) and parallel tempering (PT) algorithms introduced in \cite{geyer1991markov} and \cite{marinari1992simulated},
is to use power-tempered target distributions,
for which the target distribution at inverse temperature level $\beta$
is defined as \[\pi_\beta(x)\propto \left[\pi(x)\right]^\beta\]
for $\beta \in (0,1]$.
For each algorithm one needs to choose a sequence of $n+1$ ``inverse temperatures'' such that $\Delta:=\{\beta_0,\ldots,\beta_n\}$ where $0 \leq \beta_n<\beta_{n-1}<\ldots <\beta_1<\beta_0=1$,
so that $\pi_{\beta_0}$ equals the original target density $\pi$, and
hopefully the hottest distribution $\pi_{\beta_n}(x)$ is sufficiently
flat that it can be easily sampled.

The ST algorithm augments the original state space with a single dimensional component indicating the current inverse temperature level creating a $(d+1)$ - dimensional chain, $(\beta,X)$, defined on the  extended state space $\{\beta_0,\ldots,\beta_n\} \times \mathcal{X}$ that targets 
\begin{equation}
\pi(\beta,x)\propto K(\beta)\pi(x)^{\beta}
\label{eq:simtarg}
\end{equation}
where ideally $K(\beta)=\left[\int_{x}\pi(x)^{\beta}\mbox{d}x\right]^{-1}$, resulting in a uniform marginal distribution over the temperature component of the chain. Techniques to learn  $K(\beta)$ exist in the literature, e.g.\ \cite{Wang2001a} and \cite{Atchade2004}, but these techniques can be misleading unless used with care. The ST algorithm procedure is given in Algorithm~\ref{alg:ST}.

\begin{algorithm}
  \caption{The Simulated Tempering (ST) Algorithm
    \label{alg:ST}}
  \begin{algorithmic}[1]
    \Require{A temperature schedule $\Delta$;  initialising chain value, $(\beta_{T^0},x^0)$;  a within temperature proposal mechanism, $q_\beta(x,\cdot)$; $s$, the number of algorithm iterations  and $m$, the number of within-temperature proposals.}
    \Statex
    \Function{ST}{$\Delta, x^0, \beta_0$}
      \For{$i \gets 1 \textrm{ to } s$}
			\Let{$t$}{$(i-1)+(i-1)(m+1)$}
        \Let{$w$}{Unif$\{ -1 , 1 \}$}
				\Let{$T^{'}$}{ $T^t$+w}
				\State{Compute:\begin{equation}
A=\mbox{min}\Bigg( 1,\frac{K(\beta_{T^{'}})\pi(x^t)^{\beta_{T^{'}}}}{K(\beta_{T^t})\pi(x^t)^{\beta_{T^t}}} \Bigg).
\label{eq:accep}
\end{equation}}
        \State{Sample $U \sim \mbox{Unif}(0,1)$}
				\If{$U \le A$} 
				\Let{($\beta_{T^{t+1}},x^{t+1}$)}{($\beta_{T^{'}},x^{t}$)}
				\Else \Let{($\beta_{T^{t+1}},x^{t+1}$)}{($\beta_{T^{t}},x^{t}$)}
        \EndIf
				\State{Perform $m$ updates to the $\mathcal{X}$-marginal according to $q_{\beta_{T^{t+1}}}(x,\cdot)$ to get $\{x^{t+2} ,\ldots, x^{t+m+1} \}$.}
      \EndFor
      \State \Return{$\{(\beta_{T^{0}},x^0),(\beta_{T^{1}},x^1),\ldots, (\beta_{T^{s+s(m+1)}},x^{s+s(m+1)})\}$}
    \EndFunction
  \end{algorithmic}
\end{algorithm}

The PT approach is designed to overcome the issues arising due to  the typically unknown marginal normalisation constants. The PT algorithm runs a  Markov chain on an augmented state space $\mathcal{X}^{(n+1)}$ with target distribution defined as
\begin{eqnarray}
\pi_n(x_0,x_1,\ldots,x_n) \propto \pi_{\beta_0}(x_0)\pi_{\beta_1}(x_1)\ldots\pi_{\beta_n}(x_n).\nonumber
\end{eqnarray}
The PT algorithm procedure is given in Algorithm~\ref{alg:PT}.

\begin{algorithm}
  \caption{The Parallel Tempering (PT) Algorithm
    \label{alg:PT}}
  \begin{algorithmic}[1]
    \Require{A temperature schedule $\Delta$;  initialising chain values, $X^0=\{x_0^0,x^0_1,\ldots,x^0_n\}$; a within temperature proposal mechanism, $q_\beta(x,\cdot)$; $s$, the number of algorithm iterations  and $m$, the number of within-temperature proposals.}
    \Statex
    \Function{PT}{$\Delta, X^0$}
      \For{$i \gets 1 \textrm{ to } s$}
			\Let{$t$}{$(i-1)+(i-1)(m+1)$}
        \State{Sample $k$ uniformly from $\{ 0,1,\ldots,(n-1) \}$}
				\State{Compute:\begin{equation}
A=\mbox{min}\Bigg( 1,\frac{\pi_{\beta_{k+1}}(x^t_k)\pi_{\beta_k}(x^t_{k+1})}{\pi_{\beta_k}(x^t_k)\pi_{\beta_{k+1}}(x^t_{k+1})} \Bigg).
\label{eq:parstd1}
\end{equation}}
        \State{Sample $U \sim \mbox{Unif}(0,1)$}
				\If{$U \le A$} 
				\Let{$X^{t+1}$}{$\{x_0^t,\ldots,x_{k+1}^t,x_{k}^t,\ldots,x^t_n\}$}
				\Else \Let{$X^{t+1}$}{$X^t$}
        \EndIf
				\For{$p \gets 0 \textrm{ to } n$}
				\State{$m$ updates to the $p^{\text{th}}$-marginal chain according to $q_{\beta_p}(x,\cdot)$ to get $\{x_{p}^{t+2} ,\ldots, x_{p}^{t+m+1} \}$.}
				\EndFor
      \EndFor
      \State \Return{$\{X^0,X^1,\ldots, X^{s+s(m+1)}\}$}
    \EndFunction
  \end{algorithmic}
\end{algorithm}

\subsection{Optimal Scaling for the ST and PT Algorithms}



\cite{atchade2011towards} and \cite{roberts2014minimising}
investigated the problem of selecting
optimal inverse-temperature spacings for the ST and PT algorithms.
Specifically, if a move between two
consecutive temperature levels, $\beta$ and $\beta'=\beta+\epsilon$, is
to be proposed, then what is the optimal choice of $\epsilon$?
Too large, and the move will probably be rejected; too small, and the
move will accomplish little (similar to the situation for the
Metropolis algorithm, cf.\ \cite{roberts1997weak} and
\cite{roberts2001optimal}).


For ease of analysis,
\cite{atchade2011towards} and \cite{roberts2014minimising}
restrict to $d$-dimensional target distributions of the iid form:
\begin{equation}
\pi(x) \propto \prod_{i=1}^d f(x_i).
\label{eq:restrict}
\end{equation}
They assume that the process mixes immediately (i.e., infinitely
quickly) {\it within} each temperature, to allow them to concentrate
solely on the mixing of the inverse-temperature process itself.
To achieve non-degeneracy of the limiting behaviour of the
inverse-temperature process as $d \rightarrow \infty$, the spacings are
scaled as $O(d^{-1/2})$, i.e.\ $\epsilon = \ell/d^{1/2}$ where $\ell =
\ell(\beta)$ a positive value to be chosen optimally.

Under these assumptions,
\cite{atchade2011towards} and \cite{roberts2014minimising} prove that
the inverse-temperature processes of
the ST and PT algorithms converge, when speeded up by a factor of $d$,
to a specific diffusion limit, independent of dimension, which thus mixes
in time $O(1)$, implying that the original
ST and PT algorithms mix in time $O(d)$ as $d\to\infty$.
They also prove that
the mixing times of the ST and PT algorithms are optimised
when the value of $\ell$ is chosen to maximise the quantity
\[\ell^2 \times 2 \Phi \left( -\ell
\frac{\sqrt{I(\beta)}}{2} \right)\]
where $I(\beta)=\text{Var}_{\pi^\beta}\big( \log f(x) \big)$.
Furthermore, this optimal choice of $\ell$ corresponds to an acceptance
rate of inverse-temperature moves of 0.234 (to three decimal places),
similar to the earlier Metropolis algorithm results of \cite{roberts1997weak}
and \cite{roberts2001optimal}.

From a practical perspective, setting up the temperature levels to
achieve optimality can be done via a stochastic approximation approach
(\cite{robbins1951stochastic}), similarly to
\cite{miasojedow2013adaptive} who use an adaptive MCMC framework
(see e.g.\ \cite{roberts2009examples}).

\subsection{Torpid Mixing of ST and PT Algorithms}
\label{subsec:torpid}

The above optimal scaling results suggest that the mixing time of the ST
and PT algorithms through the temperature schedule is $O(d)$, i.e.\
grows only linearly with the dimension of the problem, which is very
promising. However, this optimal, non-degenerate scaling was derived
under the assumption of immediate, infinitely fast within-temperature
mixing, which is almost certainly violated in any real application.
Indeed, this assumption appears to be overly strong once one considers
the contrasting results regarding the scalability of the ST algorithm
from \cite{woodard2009conditions} and \cite{woodard2009sufficient}.
Their approach instead relies on a detailed analysis of the spectral gap
of the ST Markov chain and how it behaves asymptotically in dimension.
They show that in cases where the different modal structures/scalings
are distinct, this can lead to mixing times that grow exponentially in
dimension, and one can only hope to attain polynomial mixing times in
special cases where the modes are all symmetric.

The fundamental issue with the ST/PT approaches are that in cases where
the modes are not symmetric, the tempered targets do not preserve the
regional/modal weights.  That motivates the current work, which is
designed to preserve the modal weights even when performing tempering
transformations, as we discuss next.

Interestingly, a lack of modal symmetry in the
multimodal target will affect essentially all the standard multimodal
focused methods: the Equi-Energy Sampler of \cite{kou2006discussion},
the Tempered Transitions of \cite{Neal1996}, and the Mode Jumping
Proposals of \cite{Tjelmeland2001}, all suffer in this setting. Hence,
the work in this paper is applicable beyond the immediate setting of the
ST/PT approaches.

\section{Weight Stabilised Tempering}
\label{sec:powertem}


In this section, we present our modifications which preserve the weights
of the different modes when performing tempering transformations.  We
first motivate our algorithm by considering mixtures of Gaussian
distributions.

Consider a $d$-dimensional bimodal Gaussian target distribution with
means, covariance matrices and weights given by $\mu_i,~\Sigma_i,~ w_i
$ for $i=1,2$ respectively. Hence the target density is given by:
\begin{eqnarray}
\pi(x)  =  w_1 \phi(x,\mu_1,\Sigma_1)+ w_2 \phi(x,\mu_2,\Sigma_2), \label{eq:bimod}
\end{eqnarray}
where $\phi(x,\mu,\Sigma)$ is the pdf of a multivariate Gaussian with
mean $\mu$ and covariance matrix $\Sigma$.
Assuming the modes are well-separated then 
the power tempered target from~\eqref{eq:simtarg}
can be approximated by a bimodal
Gaussian mixture (cf.\ \cite{woodard2009sufficient},
\cite{NTawnThesis}):
\begin{eqnarray}
\pi(x)  =  W_{(1,\beta)} \phi\left(x,\mu_1,\frac{\Sigma_1}{\beta}\right)+ W_{(2,\beta)} \phi \left(x,\mu_2,\frac{\Sigma_2}{\beta}\right), \label{eq:bimodapps}
\end{eqnarray}
where the weights are approximated as
\begin{eqnarray}
W_{(i,\beta)} &=&\frac{  w_i^{\beta} |\Sigma_i|^\frac{1-\beta}{2}  }  {w_1^{\beta} |\Sigma_1|^\frac{1-\beta}{2}+ w_2^{\beta} |\Sigma_2|^\frac{1-\beta}{2}}\nonumber \\ &\propto& w_i^{\beta} |\Sigma_i|^\frac{1-\beta}{2}.
\label{eq:keyweight}
\end{eqnarray}


A one-dimensional example of this is illustrated in Figure~\ref{Fig:badweight},
which shows plots of a bimodal Gaussian mixture distribution
as $\beta\rightarrow 0$.
Clearly the second mode, which was originally wide but very short and
hence of low weight, takes on larger and larger weight as $\beta\to 0$,
thus distorting the problem and making it very difficult for a
tempering algorithm to move from the second mode to the first when
$\beta$ is small.

%
\begin{figure}[ht]
\begin{center}
\includegraphics[keepaspectratio,width=8cm]{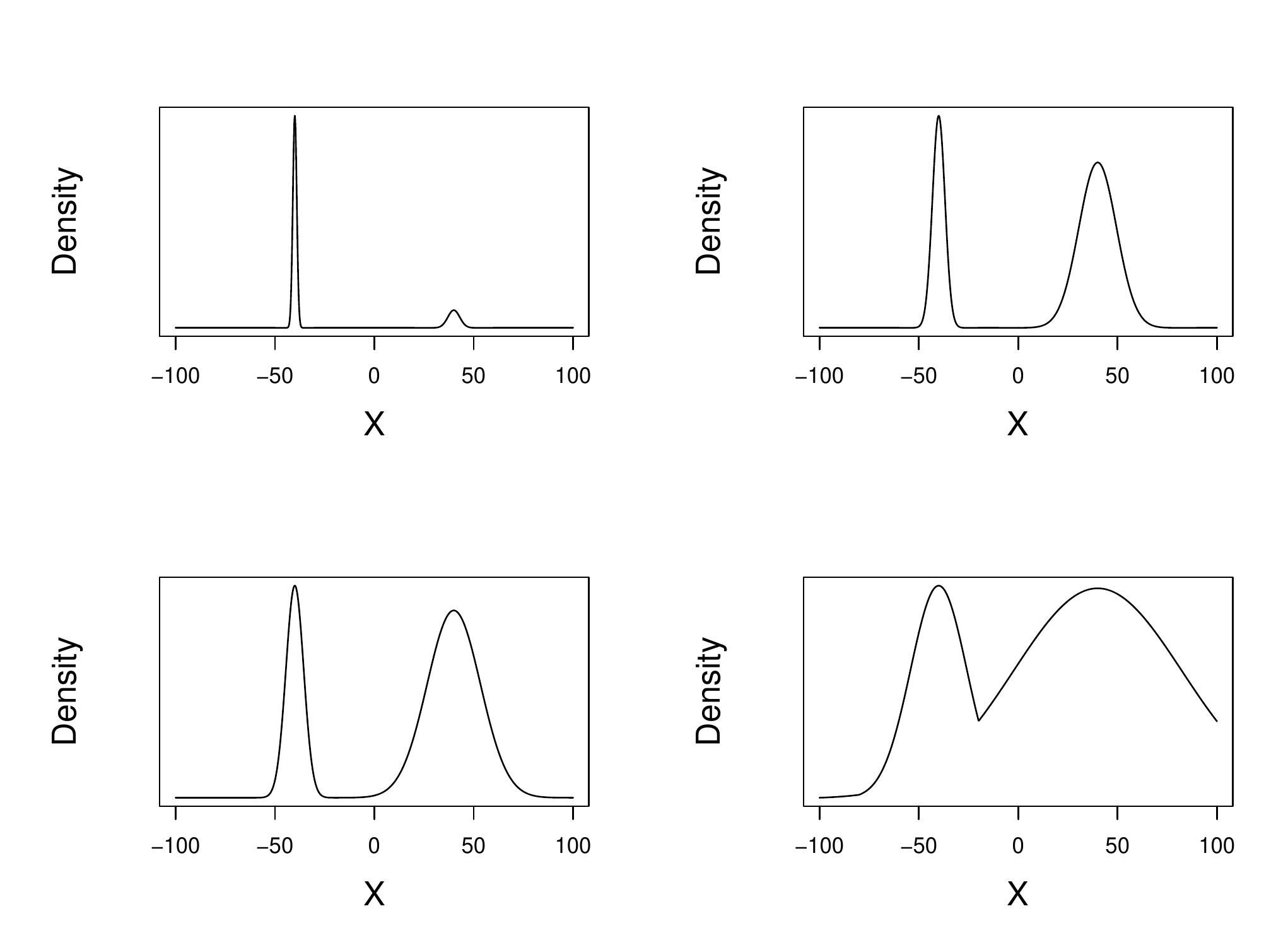}
\caption{Power-tempered target densities of a bimodal Gaussian mixture using inverse temperature levels $\beta=\{ 1,0.1,0.05,0.005 \}$ respectively. At the hot state it is evident that the mode centred on 40 begins to dominate the weight as $\beta \rightarrow \infty$.}
\label{Fig:badweight}
\end{center}
\end{figure}


Higher dimensionality makes this weight-distorting
issue exponentially worse. Consider the
situation with $w_1=w_2$ but $\Sigma_1=I_d$ and $\Sigma_2=\sigma^2 I_d$
where $I_d$ is the $d \times d$ identity matrix. Then  \begin{equation}
\frac{W_{(2,\beta)} }{W_{(1,\beta)}}\approx  \sigma^{d(1-\beta)},
\label{eq:dimdegrard} \end{equation} so the ratio of the weights
degenerates exponentially fast in the dimensionality of the problem for
a fixed $\beta$. This heuristic result in \eqref{eq:dimdegrard} shows
that between levels there can be a ``phase-transition'' in the location
of probability mass, which becomes critical as dimensionality increases.

\subsection{Weight Stabilised Gaussian Mixture Targets}
\label{subsec:theidealfix}

Consider targeting a  Gaussian mixture given by 
\begin{equation}
 \pi(x) \propto \sum_{j=1}^J w_j \phi(x,\mu_j,\Sigma_j) \label{eq:Gaussforid}
\end{equation}
in the (practically unrealistic) setting where the target is a Gaussian mixture with known parameters, including the weights. By only tempering the variance component of the modes, one can spread out the modes whilst preserving the component weights. 
We formalise this notion as follows:

\begin{definition}[Weight-Stabilised Gaussian Mixture (WSGM)]
For a Gaussian mixture target distribution $\pi(\cdot)$, as in \eqref{eq:Gaussforid}, the weight-stabilised Gaussian mixture (WSGM) target at inverse temperature level $\beta$ is defined as 
\begin{equation}
\pi_\beta^{WS}(x) \propto \sum_{j=1}^J w_j \phi\left(x,\mu_j,\frac{\Sigma_j}{\beta}\right). \label{dfn:Idealtar}
\end{equation}
\end{definition}

Figure~\ref{Fig:idealweight} shows the comparison between the target distributions used when using power-based targets vs the WSGM targets for the example from Figure~\ref{Fig:badweight}.

\begin{figure}[ht]
\begin{center}
\includegraphics[keepaspectratio,width=9cm]{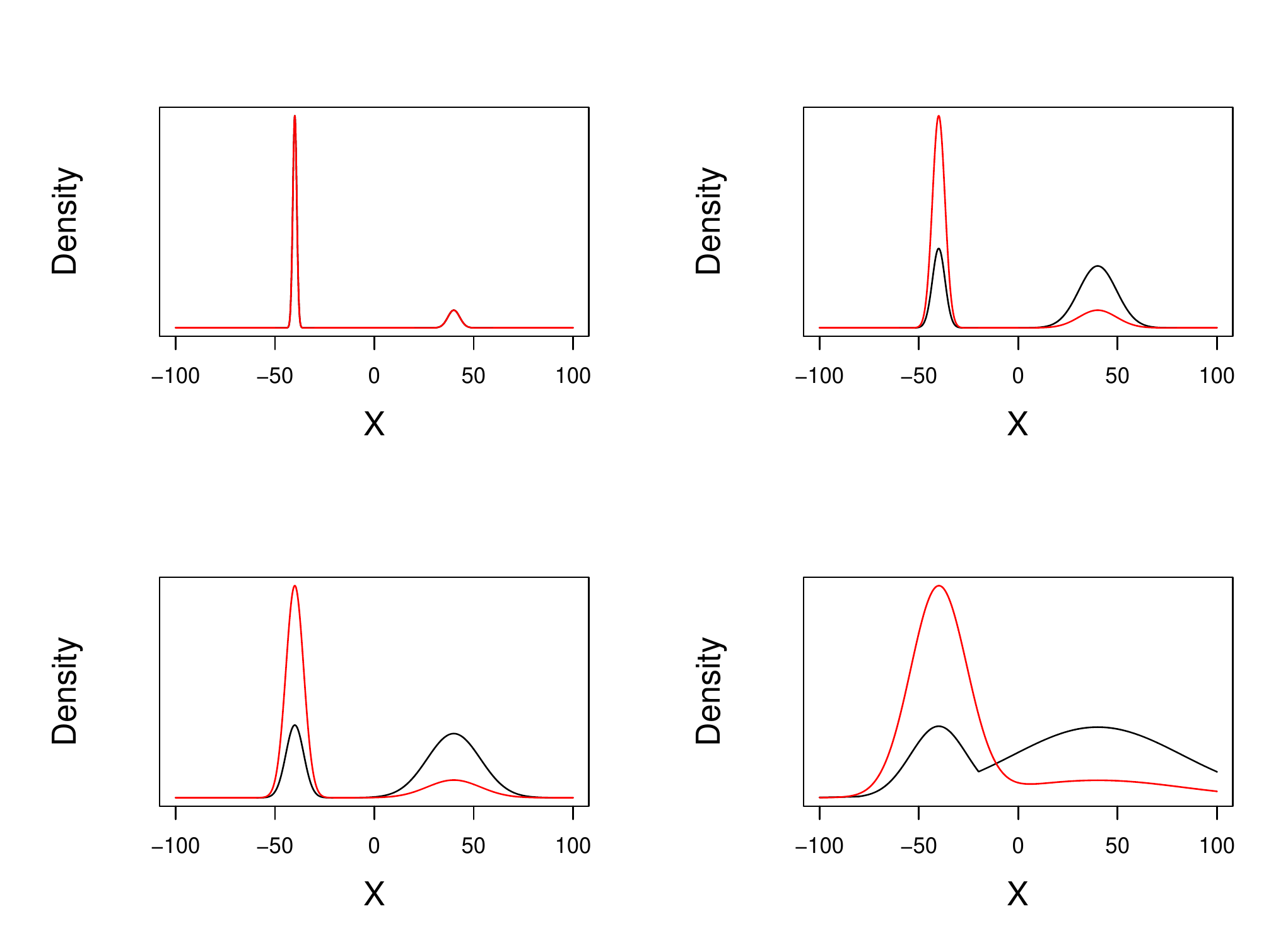}
\caption{For a bimodal Gaussian mixture target, plots of the normalised tempered target densities generated by both powering the target (black) and the WSGM targets (red) at inverse temperature levels $\beta=\{ 1,0.1,0.05,0.005 \}$.}
\label{Fig:idealweight}
\end{center}
\end{figure}

Using these WSGM targets in the PT scheme can give substantially better performance than when using the standard power based targets. This is very clearly illustrated in the simulation study section below in Section~\ref{subsec:WarningHigh}. Henceforth, when the term ``WSGM ST/PT Algorithm'' is used it refers to the implementation of the standard ST/PT algorithm but now using the WSGM targets from (\ref{dfn:Idealtar}).

\subsection{Approximating the WSGM Targets}
\label{sec:wsgmapprox}

In practice, the actual target distribution would be non-Gaussian, and
only approximated by a Gaussian mixture target.
On the other hand,
due to the improved performance gained from using the WSGM over
just targeting the respective power-tempered mixture, there is motivation
to approximate the WSGM in the practical setting where parameters are
unknown. To this end, we present a theorem establishing useful equivalent
forms of the WSGM; these alternative equivalent forms give rise to a
practically applicable approximation to the WSGM.


To establish the notation, let the target be a mixture distribution given by
\begin{equation}
	\pi(x) \propto \sum_{j=1}^J h_j(x) = \sum_{j=1}^J w_j g_j(x)
\label{eq:themixer}
\end{equation}
where $g_j(x)$ is a normalised target density. Then set
\begin{equation}
	\pi_\beta(x) \propto \sum_{j=1}^J f_j(x,\beta) = \sum_{j=1}^J W_{(j,\beta)} \frac{[g_j(x)]^\beta}{\int [g_j(x)]^\beta dx}. \label{eq:themixerbeta}
\end{equation}
Then we have the following result, proved in the Appendix.

\begin{theorem}[WSGM Equivalences] 
Consider the setting of \eqref{eq:themixer} and \eqref{eq:themixerbeta}  where the mixture components consist of multivariate Gaussian distributions i.e.\ $g_j(x) = \phi(x;\mu_j,\Sigma_j)$. Then $\forall j \in 1,\ldots, J$
\begin{enumerate}[(a)]
	\item
\relax [Standard, non-weight-preserving tempering] \\
If $f_j(x,\beta) = [h_j(x)]^\beta$ then 
	\begin{equation}
		W_{(j,\beta)} \propto w_j^\beta |\Sigma_j|^{\frac{1-\beta}{2}}.
		\nonumber
	\end{equation}
        \item
\relax [Weight-preserving tempering, version~\#1] \\
Denoting $\nabla_j = \nabla\log{h_j(x)}$ and $\nabla_j^2=\nabla^2\log{h_j(x)}$;  if $f_j(x,\beta)$ takes the form
\begin{equation}
  h_j(x)\exp\left\{  \left(\frac{1-\beta}{2}\right) (\nabla_j(x))^T  \left[\nabla^2_j(x)\right]^{-1} \nabla_j (x)  \right\}. \nonumber
\end{equation}
 then $W_{(j,\beta)} \propto w_j$.
       \item
\relax [Weight-preserving tempering, version~\#2] \\
If 
\begin{equation}
	f_j(x,\beta)= h_j(x)^\beta h_j(\mu_j)^{(1-\beta)} \nonumber
\end{equation}
then  $W_{(j,\beta)} \propto w_j$.
\end{enumerate}
\label{Theorem:equivalence}
\end{theorem}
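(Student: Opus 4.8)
The plan is to exploit the fact that the weight $W_{(j,\beta)}$ is nothing other than the total mass of the $j$-th tempered component. Since $[g_j(x)]^\beta / \int [g_j(x')]^\beta\,dx'$ integrates to one, matching the two expressions for $\pi_\beta$ in \eqref{eq:themixerbeta} term by term and integrating over $x$ shows that $W_{(j,\beta)} = \int f_j(x,\beta)\,dx$, up to a single $j$-independent normalising constant. Each of the three claims then reduces to evaluating a Gaussian integral and discarding the factors that do not depend on $j$. The one computational tool needed throughout is the elementary identity $\int_{\mathbb{R}^d}\exp\{-\tfrac{c}{2}(x-\mu)^T\Sigma^{-1}(x-\mu)\}\,dx = (2\pi)^{d/2}|\Sigma|^{1/2}c^{-d/2}$ for any $c>0$, which is just the normalisation of a Gaussian with covariance $\Sigma/c$.

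For part (a) I would substitute $f_j = [h_j]^\beta = w_j^\beta[\phi(x;\mu_j,\Sigma_j)]^\beta$, expand the Gaussian density, and apply the identity with $c=\beta$. The prefactors $(2\pi)^{-d\beta/2}$ and $\beta^{-d/2}$ are independent of $j$, leaving $W_{(j,\beta)}\propto w_j^\beta|\Sigma_j|^{(1-\beta)/2}$. Part (c) is handled the same way: evaluating the density at its mode gives $h_j(\mu_j)=w_j(2\pi)^{-d/2}|\Sigma_j|^{-1/2}$, so that $f_j = h_j^\beta h_j(\mu_j)^{1-\beta}$ collapses to $w_j(2\pi)^{-d/2}|\Sigma_j|^{-1/2}[\phi(x;\mu_j,\Sigma_j)]^\beta$; integrating with $c=\beta$ cancels the $(2\pi)$ and $|\Sigma_j|$ factors exactly and yields $W_{(j,\beta)}=w_j\beta^{-d/2}\propto w_j$.

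The main obstacle is part (b), where the work is a linear-algebra simplification rather than an integral. For a Gaussian component one has $\nabla_j(x)=-\Sigma_j^{-1}(x-\mu_j)$ and $\nabla_j^2(x)=-\Sigma_j^{-1}$, which is constant in $x$, so $[\nabla_j^2(x)]^{-1}=-\Sigma_j$. The quadratic form then telescopes, since $(\nabla_j)^T[\nabla_j^2]^{-1}\nabla_j = -(x-\mu_j)^T\Sigma_j^{-1}\Sigma_j\Sigma_j^{-1}(x-\mu_j) = -(x-\mu_j)^T\Sigma_j^{-1}(x-\mu_j)$, where the careful sign bookkeeping across the three factors is the only place one might slip. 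Substituting this into the exponential correction and combining with the exponent of $h_j$ produces a single Gaussian kernel with effective scaling $c=2-\beta$ (not $\beta$, so the resulting component is weight-preserving but not shape-identical to the WSGM). Applying the integral identity once more gives $W_{(j,\beta)}=w_j(2-\beta)^{-d/2}$, hence $W_{(j,\beta)}\propto w_j$, completing the proof. I would expect the only real care to be needed in the sign and inverse manipulations of part (b) and in confirming that every surviving prefactor in parts (a) and (c) is genuinely independent of $j$.
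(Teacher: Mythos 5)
Your proposal is correct and, for parts (a) and (c), is essentially the paper's own argument in a slightly different costume: the paper rewrites each $f_j(x,\beta)$ as a $j$-dependent constant times the normalised Gaussian $\phi\left(x;\mu_j,\frac{\Sigma_j}{\beta}\right)$ and reads the weight off that constant, whereas you integrate $f_j$ over $x$; both steps are the same Gaussian normalisation identity, and your preliminary observation that $W_{(j,\beta)}=\int f_j(x,\beta)\,dx$ up to a $j$-free constant is exactly the licence the paper uses implicitly when it matches \eqref{eq:themixerbeta} term by term.

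The genuine divergence is in part (b), where your careful sign bookkeeping has uncovered an inconsistency in the printed statement rather than an error in your argument. Taking the displayed formula literally you correctly obtain $f_j(x,\beta)\propto w_j\exp\left\{-\frac{2-\beta}{2}(x-\mu_j)^T\Sigma_j^{-1}(x-\mu_j)\right\}$, a Gaussian with covariance $\Sigma_j/(2-\beta)$: its total mass is indeed $\propto w_j$, but, as you note, it is not of the form $[g_j(x)]^\beta/\int [g_j]^\beta\,dx$ that \eqref{eq:themixerbeta} presupposes (so $W_{(j,\beta)}$ is only defined there via your total-mass reinterpretation), and the component actually narrows as $\beta\downarrow 0$, the opposite of tempering. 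The paper's proof runs in the other direction: it starts from the multiplicative adjustment factor needed to cancel the $w_j^\beta|\Sigma_j|^{(1-\beta)/2}$ distortion of part (a) and then re-expresses that factor through $\nabla_j$ and $\nabla_j^2$; carried out consistently this produces $f_j(x,\beta)\propto h_j(x)\exp\left\{-\frac{1-\beta}{2}\nabla_j^T[\nabla_j^2]^{-1}\nabla_j\right\}$, i.e.\ the opposite sign in the exponent, which collapses to $w_j\phi\left(x;\mu_j,\frac{\Sigma_j}{\beta}\right)$ and hence matches the WSGM shape as well as the weight. So your computation is right as far as it goes, but to prove the statement as intended you should flip that sign (equivalently, use $[-\nabla_j^2]^{-1}$), after which the same $c=\beta$ Gaussian integral you used in parts (a) and (c) finishes the argument.
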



\begin{flushleft}
\textbf{\textit{Remark 1:}} In Theorem~\ref{Theorem:equivalence},
statement (b) shows that second order gradient information of the $h_j$'s
can be used to preserve the component weight in this setting.

\textbf{\textit{Remark 2:}} Statement (c) extends statement (b) to no
longer require the gradient information about the $h_j$ but simply the
mode/mean point $\mu_j$. Essentially this shows that by appropriately
rescaling according to the height of the component as the components are
``powered up'' then component weights are preserved in this setting.

\textbf{\textit{Remark 3:}} A simple calculation shows that statement
(c) holds for a more general mixture setting when all components of the
mixture share a common distribution but  different location and scale
parameters.
\end{flushleft}

\subsection{Hessian Adjusted Tempering}

The results of Theorem~\ref{Theorem:equivalence} are derived under the
impractical setting that the components are all known and that
$\pi(\cdot)$ is indeed a mixture target. One would like to exploit the
results of (b) and (c) from Theorem~\ref{Theorem:equivalence} to aid
mixing in a practical setting where the target form is unknown but may
be well approximated by a mixture.

Suppose herein that the modes of the multimodal target of interest,
$\pi(\cdot)$, are well separated. Thus an approximating mixture of the
form given in \eqref{eq:themixer} would approximately satisfy\[ \pi(x)
\propto h_M(x)\] where $M=\sup_j\left\{ h_j(x) \right\}$.
Hence it is tempting to apply a version of
Theorem~\ref{Theorem:equivalence}(b)
to $\pi(\cdot)$ directly as opposed to the $h_j$. So at
inverse temperature $\beta$, the point-wise target would be proportional to
\begin{equation}
	\pi(x)\exp\left\{  \left(\frac{1-\beta}{2}\right) (\nabla_\pi(x))^T  \left[\nabla^2 _\pi(x)\right]^{-1} \nabla_\pi(x)  \right\}. \nonumber
\end{equation}
where $\nabla_\pi=\nabla\log{\pi(x)}$ and $\nabla^2_\pi=\nabla^2 \left(\log{\pi(x)}\right)$.
This only uses point-wise gradient information up to second order. At
many locations in the state space, provided that $\beta$ is at a temperature level that is sufficiently cool that the tail overlap is insignificant, and if the target was indeed a Gaussian mixture then this approach would give almost exactly the
same evaluations as $\pi_\beta(\cdot)$ from \eqref{eq:themixerbeta} in the
setting of (b). However, at locations between modes when the Hessian of
$\log(\pi(x))$ is positive semi-definite, this target behaves very
badly, with explosion points that make it improper.

Instead, under the setting of well separated modes then one can appeal
instead to the weight preserving characterisation in
Theorem~\ref{Theorem:equivalence}(c). Assume that one can assign each
location in the state space to a ``mode point'' via some function $x
\rightarrow \mu_{x,\beta}$, with a corresponding tempered target given by
\[\pi_{\beta}(x)  \propto  \pi(x)^{\beta} \pi(\mu_{x,\beta})^{1-\beta}.\] Note the mode assignment function's dependence on $\beta$. This can be understood to be necessary by appealing to Figure~\ref{Fig:idealweight} where it is clear that the narrow mode in the WSGM target  has a  ``basin of attraction'' that expands as the temperature increases.

\begin{definition}[Basic Hessian Adjusted Tempering (BHAT) Target] \label{def:robadjB}
For a target distribution $\pi(\cdot)$ on $\mathbb{R}^d$ with a corresponding ``mode point assigning function'' $\mu_{x,\beta}: \mathbb{R}^d \rightarrow \mathbb{R}^d$; the BHAT target at inverse temperature level $\beta \in (0,\infty)$ is defined as
\begin{equation}
	\pi_{\beta}^{BH}(x)  \propto  \pi(x)^{\beta} \pi(\mu_{x,\beta})^{1-\beta}. \label{eq:robadj1B}
\end{equation}
\end{definition}

However, in this basic form there is an issue with this target distribution at hot temperatures when $\beta \rightarrow 0$. The problem is that it leaves discontinuities that can grow exponentially large and this can make the hot state temperature level mixing exponentially slow if using standard MCMC methods for the within temperature moves.

This problem can be mitigated if one assumes more knowledge about the target distribution. Suppose that the mode points are known and so there is a collection of $K$ mode points $M=\{ \mu_1,\ldots,\mu_K  \}$. This assumption seems quite strong but in general if one cannot find mode points then this is essentially saying that one cannot find the basins of attraction and thus the desire to obtain the modal relative masses (as MCMC is trying to do) must be relatively impossible. Indeed, being able to find mode points either prior to or online in the run of the algorithm is possible e.g.\ \cite{Tjelmeland2001}, \cite{Behrens2008} and \cite{ALPSTawn}. Furthermore, assume that the target, $\pi(\cdot)$, is $C^2$ in a neighbourhood of the $K$ mode locations and so there is an associated collection of positive definite covariance matrices $S=\{ \Sigma_1,\ldots,\Sigma_K \}$ where $\Sigma_j= -\left(\nabla^2 \log \pi(\mu_j) \right)^{-1}$. From this and knowing the evaluations of $\pi(\cdot)$ at the mode points then one can approximate the weights in the regions to attain a collection $\mathbf{\hat{W}}=\{\hat{w}_1,\ldots,\hat{w}_K\}$ where
\[
\hat{w_j}= \frac{\pi(\mu_j) |\Sigma_j|^{1/2}}{\sum_{k=1}^K \pi(\mu_k) |\Sigma_k|^{1/2}}
\]

With $\phi(\cdot|\mu_j,\Sigma_j)$ denoting the pdf of a $N(\mu_j,\Sigma_j)$ then we define the following modal assignment function motivated by the WSGM:
\begin{definition}[WSGM mode assignment function] \label{def:modeassign}
With collections $M$, $S$ and $\mathbf{\hat{W}}$ specified above then for a location $x\in \mathbb{R}^d$ and inverse temperature $\beta$ define the WSGM mode assignment function as 
\begin{equation}
	A(x,\beta) = \argmax_{j}\left\{ \hat{w}_j \phi \left(x|\mu_j,\frac{\Sigma_j}{\beta}\right)\right\}. \label{eq:ass}
\end{equation}
\end{definition}

Under the assumption that there are collections  $M$, $S$ and $\mathbf{\hat{W}}$ that have either been found through prior optimisation or through an adaptive online approach we define the following:

\begin{definition}[Hessian Adjusted Tempering (HAT) Target] \label{def:robadj}
For a target distribution $\pi(\cdot)$ on $\mathbb{R}^d$ with collections $M$, $S$ and $\hat{W}$ defined above along with the associated mode assignment function given in \eqref{eq:ass}, then the 
Hessian adjusted tempering (HAT) target is defined as
\begin{equation}
	\pi_{\beta}^{H}(x)  \propto
  \begin{cases}
    \pi(x)^{\beta} \pi(\mu_{A(x,\beta)})^{1-\beta} & \text{if $A(x,\beta)=A(x,1) $} \\
    G(x,\beta) & \text{if $A(x,\beta) \neq A(x,1) $}
  \end{cases}  \label{eq:robadj1}
\end{equation}
where with $\hat{A}:=A(x,\beta)$
\[
G(x,\beta)= \frac{\pi(\mu_{\hat{A}})\left((2\pi)^{d}\Sigma_{\hat{A}}\right)^{1/2}\phi \left(x|\mu_{\hat{A}},\frac{\Sigma_{\hat{A}}}{\beta}\right)}{\beta^{d/2}}.
\]
\end{definition}

Essentially the function ``$G$'' specifies the target distribution when the chain's location, $x$, is in a part of the state space where the narrower modes expand their basins of attraction as the temperature gets hotter. Both the choice of $G$ and the mode assignment function used   in  Definition~\ref{def:robadj} are  somewhat canonical to the Gaussian mixture setting. With the same assignment function specified in Definition~\ref{def:modeassign}, an alternative and seemingly robust ``$G$'' that one could use is given by
\begin{eqnarray}
&& G(x,\beta)= \pi(x,1,A) \nonumber \\
&& + \left(\frac{2P(A(x,\beta))}{P(A(x,\beta))+P(A(x,1))}-1\right)[\pi(x,\beta,A) -\pi(x,1,A) ] \nonumber
\end{eqnarray}
where $\pi(x,\beta,A)=  \pi(x)^{\beta} \pi(\mu_{A(x,\beta)})^{1-\beta}$ and $P(j)=\hat{w}_j \phi \left(x|\mu_j,\frac{\Sigma_j}{\beta}\right)$.

With either of the suggested forms of the function $G$ then under the assumption that
the target is continuous and bounded on $\mathbb{R}^d$,
and that for all $\beta \in (0,\infty)$,
\[  \int_{\mathcal{X}} \pi^\beta(x) d x <\infty \, ,\]
then  $\pi_{\beta}^{H}(x)$
is a well defined probability density, i.e.\
Definition~\ref{def:robadj} makes sense.

For a bimodal Gaussian mixture example Figure~\ref{fig:HATillu} compares the HAT target relative to the WSGM target; showing that the HAT targets are a very good approximation to the WSGM targets, even at the hotter temperature levels.
\begin{figure}[ht]
\begin{center}
\includegraphics[keepaspectratio,width=8cm]{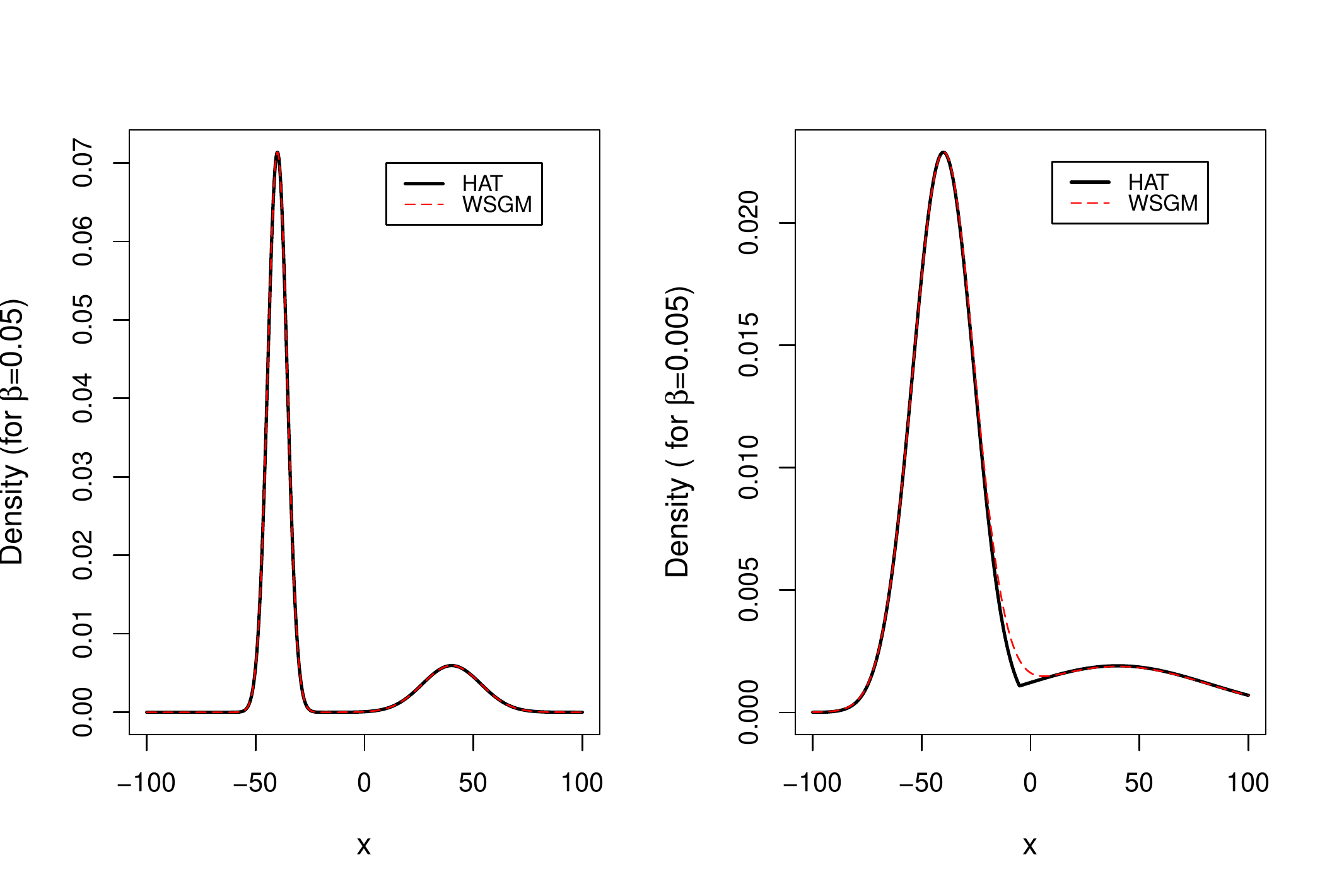}
\caption{For the same bimodal Gaussian target from Figure~\ref{Fig:idealweight}, here is a comparison of the HAT vs WSGM targets at inverse temperatures $\beta=0.05$ and $\beta=0.005$ respectively. Note they are almost identical at the colder temperature; but they do differ slightly in the interval $(-25 , 10)$ at the hotter temperature where the ``G'' function is allowing the footprint of the narrow mode to expand.}
\label{fig:HATillu}
\end{center}
\end{figure}


We propose to use the HAT targets in place of the
power-based targets for the tempering algorithms given in
Section~\ref{sec:parallel}.  We thus define the following algorithms,
which are explored in the following sections.

\medskip
\begin{definition}[Hessian Adjusted Simulated Tempering (HAST) Algorithm]
The HAST algorithm is an implementation of the ST algorithm
(Section~\ref{sec:parallel}, Algorithm~\ref{alg:ST}) where the target
distribution at inverse temperature level $\beta$ is given by
$\pi_{\beta}^{H}(\cdot)$ from Definition~\ref{def:robadj}.
\label{HASTdef}
\end{definition}

\medskip
\begin{definition}[Hessian Adjusted (Parallel) Tempering (HAT) Algorithm]
The HAT algorithm is an implementation of the PT algorithm
(Section~\ref{sec:parallel}, Algorithm~\ref{alg:PT}) where the target
distribution at inverse temperature level $\beta$ is given by
$\pi_{\beta}^{H}(\cdot)$ from Definition~\ref{def:robadj}.
\label{HATdef}
\end{definition}

\section{Simulation Studies}
\label{sec:simstudy}

\subsection{WSGM Algorithm Simulation Study}
\label{subsec:WarningHigh}

We begin by comparing the performances of a ST algorithm targeting both
the power-based and WSGM targets for a simple  but challenging bimodal Gaussian mixture  target
example.  The example will illustrate that the traditional ST algorithm,
using power-based targets, struggles to mix effectively through the
temperature levels due to a bottleneck effect caused by the lack of
regional weight preservation.

The example considered is the 10-dimensional target distribution given by the bi-modal Gaussian mixture
\begin{equation}
\pi(x)= w_1 \phi_{(\mu_1,\Sigma_1)}(x)+w_2 \phi_{(\mu_2,\Sigma_2)}(x)
\end{equation}
where $w_1=0.2$, $w_2=0.8$, $\mu_1=(-10,-10,\ldots,-10)$, $\mu_2=(10,10,\ldots,10)$, $\Sigma_1= 9\mbox{\bf{I}}_{10}$ and $\Sigma_2= \mbox{\bf{I}}_{10}$.
When power based tempering is used, then
mode~1 accounts for only 20\% of the mass at the cold level,
but at the hotter temperature levels
becomes the dominant mode containing almost all the mass.

For both runs the same geometric temperature schedule was used: \[\Delta=\{1, \, 0.32, \, 0.32^2, \, \ldots, \, 0.32^6  \}.\] This geometric schedule is justified by Corollary~1 of \cite{Tawn2018rwopt}, which suggests this is an optimal setup in the case of a regionally weight preserved PT setting. Indeed, this schedule induces a swap move acceptance rates around 0.22 for the WSGM algorithm; close to the suggested 0.234 optimal value. 

This temperature schedule gave swap acceptance rates of approximately 0.23 between all levels of the power-based ST algorithm except for the coldest level swap where this degenerated to 0.17. That shows that the power-based ST algorithm was set up essentially optimally according to the results in \cite{atchade2011towards}.


In order to ensure that the within-mode mixing isn't influencing the
temperature space mixing, a local modal independence sampler was
used for the within-mode moves. This essentially means that once a mode has
been found, the mixing is infinitely fast. We use the
modal assignment function $\mu_{x,\beta}$ which specifies that the
location $x$ is in
mode 1 if $\bar{x}<0$ and in mode 2 otherwise.
Then the within-move proposal distribution for a move at
inverse temperature level $\beta$ is given by
\begin{equation}
	q_{\beta}(x,y)=\phi_{\left(\mu_1,\frac{\Sigma_1}{\beta}\right)}(y)\mathbbm{1}_{\bar{x}<0}+\phi_{\left(\mu_2,\frac{\Sigma_2}{\beta}\right)}(y)\mathbbm{1}_{\bar{x}\ge 0},
\end{equation}
where $\phi_{\mu,\Sigma}(.)$ is the density function of a Gaussian random variable with mean $\mu$ and variance matrix $\Sigma$.

Figure~\ref{Fig:Trace10DHardIdeal} plots a functional of the inverse temperature at each iteration of the algorithm. The functional is
\begin{equation}
	h(\beta_t,x_t)
:=\frac{\log \left(  \frac{\beta_t}{\beta_{\text{min}}} \right)}{\log \left(\frac{1}{\beta_{\text{min}}}\right)}\mbox{sgn}\left( \bar{x}_t \right) \label{eq:tracefun}
\end{equation}
where $\mbox{sgn}(.)$ is the usual sign function and $\beta_{\text{min}}$ is the minimum of the inverse temperatures. The significance of this functional will become evident from the results of the core theoretical contributions made in this paper in Theorems~\ref{Theorem:diffusion} and \ref{Theorem:skewBM} in Section~\ref{sec:Theoreticalanalysis}. Essentially it is taking a transformation of the current inverse temperature at iteration $t$ of the Markov chain, such that when $\beta_t=1$ the magnitude of $h$ is 1 and when the temperature is at its hottest level, i.e.\ $\beta_t = \beta_{\text{min}}$, then $h$ is zero. Furthermore, in this example the sign of $\bar{x}_t$ is a reasonable proxy to identify the mode that the chain is contained in with a negative value suggesting the chain is in the mode centred on $\mu_1$ and $\mu_2$ otherwise.

Figure~\ref{Fig:Trace10DHardIdeal} clearly illustrates that the hot state modal weight inconsistency leads the chain down a poor trajectory since
at hot temperatures nearly all the mass is in modal region 1.
This results in the chain never reaching the other mode in the entire
(finite) run of the algorithm.
Indeed, the trace plots in Figure~\ref{Fig:Trace10DHardIdeal} show that
the chain is effectively trapped in mode 1, which although it only has
20\% of the mass in the cold state, is completely dominant at the hotter states.

\begin{figure}[ht]
\begin{center}
\includegraphics[keepaspectratio,width=8cm]{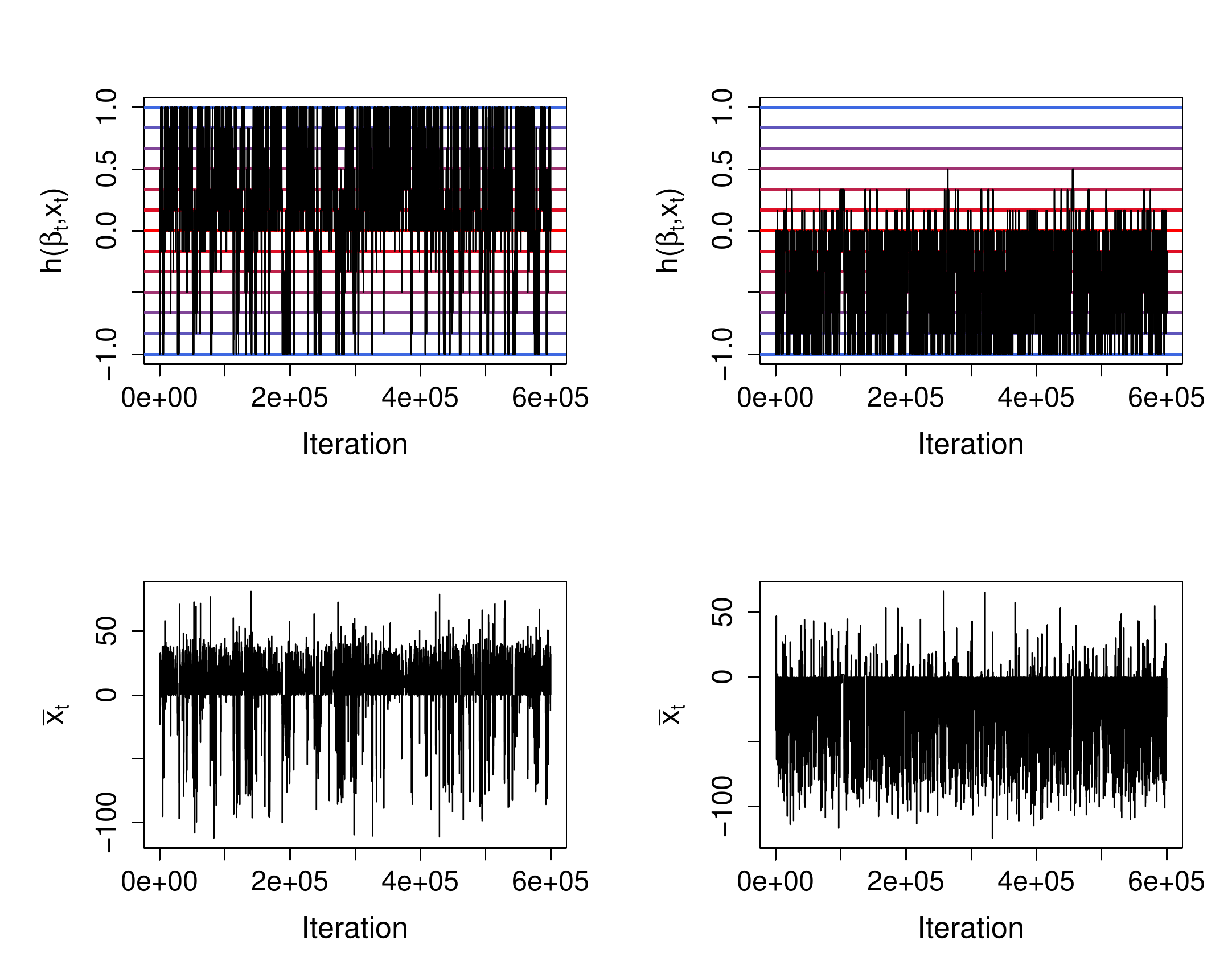}
\caption{Top: Trace plots of the functional of the simulated tempering chains given in (\ref{eq:tracefun}). On the left is the version using the WSGM targets, which mixes well through the temperature schedule and finds both modal regions. On the right is the version using the standard power-based targets, which fails to ever find one of the modes. Bottom: Trace plots of $\bar{x}_t$ in each of the cases respectively.}
\label{Fig:Trace10DHardIdeal}
\end{center}
\end{figure}

\subsection{Simulation study for HAT}
\label{sec:HATExamples}

To demonstrate the capabilities of the HAT algorithm in a non-Gaussian setting where the modes exhibit skew then a five-dimensional four-mode skew-normal mixture target example is presented. Albeit a mixture, this example can be seen to give similar target distribution  geometries to non-mixture settings due to the skew nature of the modes.
\begin{equation}
 	\pi(x) \propto \sum_{k=1}^4 w_k \prod_{i=1}^5 f(x_i|\mu_k,\sigma_k,\alpha) \label{eq:5dimex}
\end{equation}
where the skew normal density is given by 
\[
f(z|\mu,\sigma,\alpha)= \frac{2}{\sigma}\phi\left( \frac{z-\mu}{\sigma}\right)\Phi\left( \frac{\alpha(z-\mu)}{\sigma}\right)
\] 
and where $w_1=w_2=w_3=w_4=0.25$, $\mu_1=-15$, $\mu_2=15$, $\mu_3=45$, $\mu_4=-45$, $\sigma_1=1$, $\sigma_2=1$,  $\sigma_3=3$,  $\sigma_4=3$ and $\alpha=2$.

As will be seen in the forthcoming simulation results the imbalance of scales within each modal region ensures that this is a very challenging problem for the PT algorithm.
 
Since this target fits into the setting of Corollary~1 of \cite{Tawn2018rwopt} then a geometric inverse temperature schedule is approximately optimal for the HAT target in this setting. Indeed,  \cite{Tawn2018rwopt} suggest that the geometric ratio should be tuned so that the acceptance rate for swap moves between consecutive temperatures is approximately 0.234. In this case, eight tempering levels were used to obtain effective mixing; these were geometrically spaced and given by $\{1,0.31,0.31^2,\ldots, 0.31^7  \}$, was found to be approximately optimal and gave an average of 0.22 for the swaps between consecutive levels for the HAT algorithm. 

Using  this temperature schedule along with appropriately tuned RWM proposals for the within temperature  moves,  10 runs of both the PT and HAT algorithms were performed. In each individual run,  each temperature marginal was updated with $m=5$ RWM proposals followed by a temperature swap move proposal and this was repeated with $s=100,000$ sweeps. This results in a sample output of 600,001 of the cold state chain prior to any burn-in removal. Herein for this example denote  $N=600,001$.

As expected, the scale imbalance between the modes resulted in the PT algorithm performing poorly and with significant bias in the sample output. In contrast, the HAT
approach was highly successful in converging relatively rapidly to the target distribution, exhibiting far more frequent inter-modal jumps at the cold state.

Figure~\ref{fig:FiveIdeal} shows one representative example  of a run of the PT and HAT algorithms by plotting the first component of the five-dimensional marginal chain at the coldest target state. It illustrates the impressive inter-modal mixing of HAT across all 4 modal regions as opposed to the very sticky mixing exhibited by the PT algorithm. 

\begin{figure}[ht]
\begin{center}
\includegraphics[keepaspectratio,width=8.5cm]{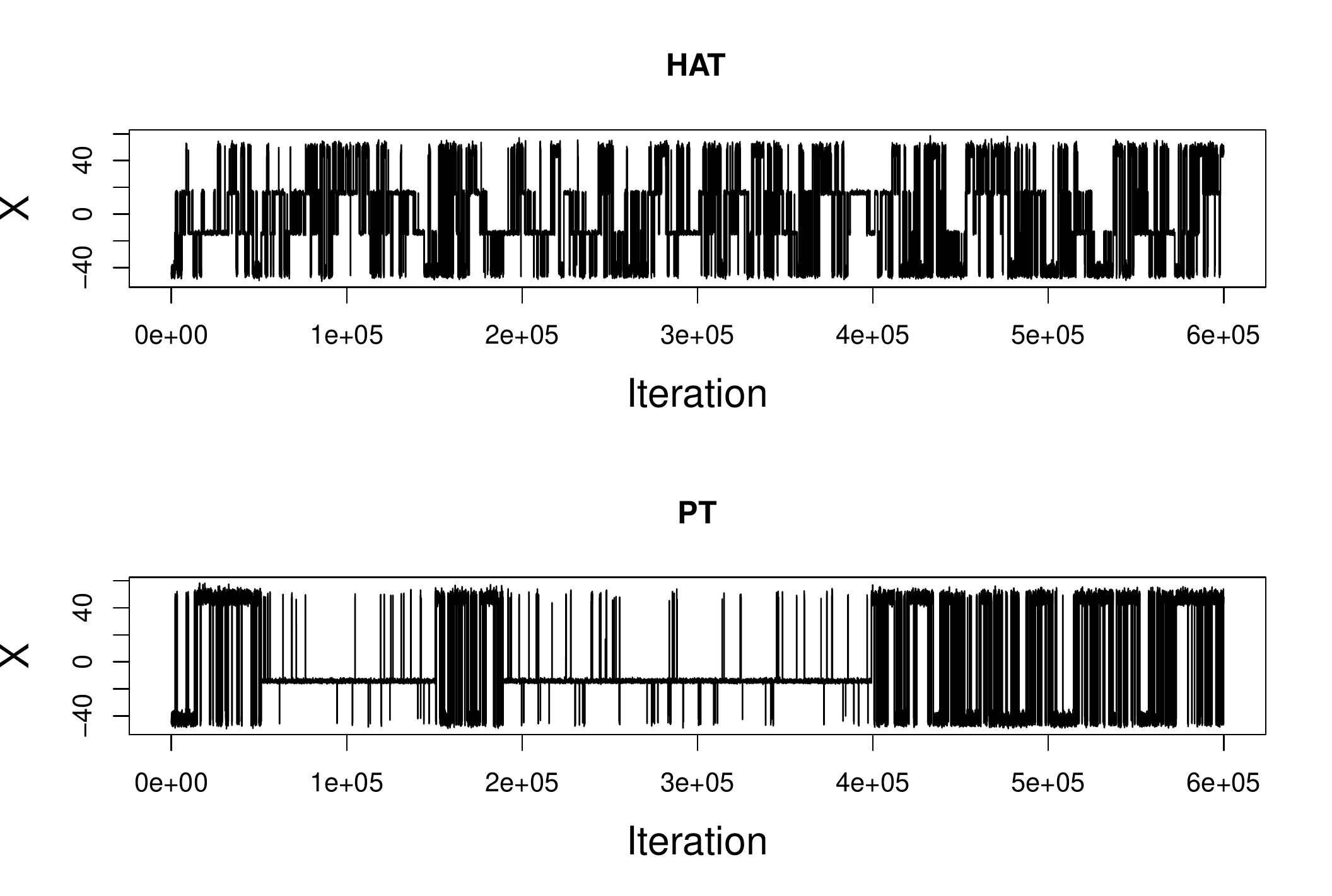}
\caption{Two trace plots of the first marginal component cold state chain targeting the distribution in (\ref{eq:5dimex}) using the HAT and PT algorithms respectively. Note the HAT algorithm run illustrates a chain that is performing rapid exploration between all four modes whereas the PT algorithm exhibits significant sticky patches.}
\label{fig:FiveIdeal}
\end{center}
\end{figure}

Figure~\ref{fig:fiveDmovingweightapproximation} shows the running approximation of $\mathbb{P}_{\pi}(-30<X^1_i<0)$ (which is approximately the weight of the first mode i.e.\ $w_1=0.25$) after the $k^{th}$ iteration of the cold
state chains, after removing a burn-in period of 10,000 initial iterations,
for the ten runs of the PT and HAT runs respectively. The
 approximation after iteration $k \le N$ is given by
\begin{equation}
	\hat{W}^k_1 :=  \frac{1}{k-10000}\sum_{i=10001}^k  \mathbbm{1}_{\left(-30<X^1_i<0\right)} \label{eq:weighestimator}
\end{equation}
where $X^1_i$ is the location of the first component of the five-dimensional chain at the coldest temperature level after the $i^{th}$ iteration.
This figure indicates that the PT algorithm fails
to provide a stable estimate for $\mathbb{P}_{\pi}(-30<X^1_i<0)$ with the running weight approximations far from stable at the end of the runs; in stark contrast  the HAT algorithm exhibits very stable performance in this case. In fact the final estimates for $\hat{W}^N_1$ the are given in Table~\ref{tab:tib1}.

\begin{figure}[ht]
\begin{center}
\includegraphics[keepaspectratio,width=9cm]{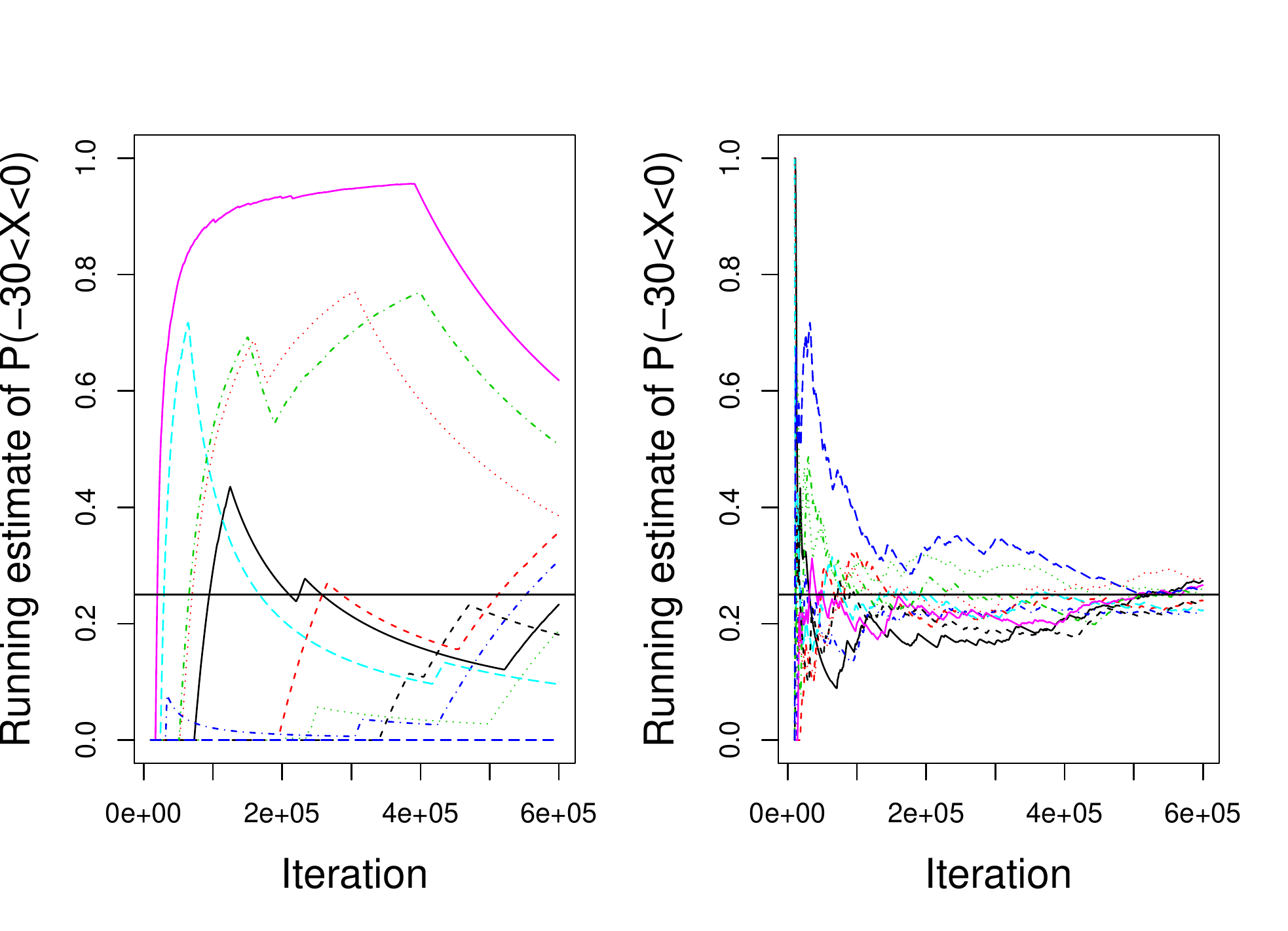}
\caption{Running estimate of  $\mathbb{P}_{\pi}(-30<X^1_i<0)$, i.e.\ $\hat{W}^k_1$ given in (\ref{eq:weighestimator}), for 10 runs of the PT (left) and HAT (right) algorithms. The horizontal black  line at the level 0.25 represents the true probability that one aims to target in this case. In both cases a burn-in of 10,000 iterations was removed. Observe the lack of convergence of the weight estimates for the PT runs compared to the relatively impressive estimates from the HAT runs.}
\label{fig:fiveDmovingweightapproximation}
\end{center}
\end{figure}

\begin{table}
\caption{The end point estimates, $\hat{W}^N_1$, of $\mathbb{P}_{\pi}(-30<X^1_i<0)$ from the 10 runs of the PT and HAT algorithms. The true value of 0.25 appears to be well approximated by HAT but not by PT.}
\label{tab:tib1}       
\begin{tabular}{l|llllllllll}
\hline\noalign{\smallskip}
PT & 0.23 & 0.36 & 0.19 & 0.31 & 0.10 & 0.12 & 0.18 \\

& 0.39 &0.51 & 0 \\
\noalign{\smallskip}\hline\noalign{\smallskip}
HAT & 0.27 & 0.24 & 0.26 & 0.22 & 0.22 & 0.27 & 0.23 \\
& 0.28 & 0.25 & 0.26 \\
\noalign{\smallskip}\hline
\end{tabular}
\end{table}

Table~\ref{tab:tib2} presents the results of using the 10 runs of each algorithm in a batch-means approach to estimate the Monte Carlo variance of the estimator of $\hat{W}^N_1$. The results in Table~\ref{tab:tib2} show that the Monte Carlo error is approximately a factor of 10 higher for the PT algorithm than the HAT approach. 

However, it is also important to analyse this inferential gain jointly with the increase in computational cost. Table~\ref{tab:tib2} also shows that the average run time for the 10 HAT runs was 451 seconds which is a little more than 2 times slower than the average run time of the PT algorithm (217 seconds) in this example. The major extra expense is due to the cost of computing the WSGM mode assignment function in \eqref{eq:ass} at both the cold and current temperature of interest at each evaluation of the HAT target. Anyhow, this is very promising since for a little more than twice the computational cost the inferential accuracy appears to be ten times better in this instance.

\begin{table}
\caption{Using the 10 runs of each algorithm in a batch-means approach to estimate the Monte Carlo variance of the pooled estimator $\hat{W}^{10N}_1$ i.e.\ $\text{SD}(\hat{W}^N_1)$. Also displayed is the average run time (RT, measured in seconds) of a single one of the 10 repaeted runs for both methods respectively.}
\label{tab:tib2}     
\begin{tabular}{l|llll}
\hline\noalign{\smallskip}
 & $\hat{W}^{10 N}_1$  & $\hat{\text{SD}}(\hat{W}^N_1)$ & $\hat{\text{SD}}$($\hat{W}^{10 N}_1$) & RT (secs)\\
\noalign{\smallskip}\hline\noalign{\smallskip}
PT & 0.288 & 0.187 & 0.0593 & 217 \\
\noalign{\smallskip}\hline\noalign{\smallskip}
HAT &  0.249 & 0.019 & 0.0063 & 451 \\
\noalign{\smallskip}\hline
\end{tabular}
\end{table}

\section{Diffusion Limit and Computational Complexity}
\label{sec:Theoreticalanalysis}

In this section, we provide some theoretical analysis for our algorithm.
We shall prove in Theorems~\ref{Theorem:diffusion}
and~\ref{Theorem:skewBM} below that as the
dimension goes to infinity, a simplified and speeded-up version of our
weight-preserving simulated tempering algorithm (i.e., 
the HAST Algorithm from Definition~\ref{HASTdef}, equivalent to
the ST Algorithm~\ref{alg:ST} with the adjusted target from
Definition~\ref{def:robadj}) converges to a certain
specific diffusion limit.  This limit will allow us to draw some
conclusions about the computational complexity of our algorithm.


\subsection{Assumptions}

We assume for simplicity (though see below) that
our target density $\pi$ is a mixture of the form~\eqref{eq:themixer}
with just $J=2$ modes, of weights $w_1=p$ and $w_2=1-p$ respectively,
with each mixture component a special i.i.d.\ product
$g_j(x) = \prod_{i=1}^d
f_j(x_i)
$ as in \eqref{eq:restrict}.
We further assume that a weight-preserving transformation
(perhaps inspired by Theorem~\ref{Theorem:equivalence}(b) or~(c))
has already been done, so that
\begin{eqnarray*}
\pi_\beta(x)
\ &\propto& \ p {[g_1(x)]^\beta \over \int [g_1(x)]^\beta dx}
+ (1-p) {[g_2(x)]^\beta \over \int [g_2(x)]^\beta dx} \\
 \\ &\equiv& \ p g_1^\beta(x) + (1-p) g_2^\beta(x)
\end{eqnarray*}
for each $\beta$.
We consider a simplified version of the weight-preserving process,
in which the chain always mixes immediately within each mode, but
the chain can only jump between modes when at the hottest temperature $\beta_{min}$, at which point it jumps to
one of the two modes with probabilities $p$ and $1-p$ respectively.
Let $I$ denote the indicator of which mode the process is in, taking value $1$ or $2$.

We shall sometimes concentrate on the {\it Exponential Power Family} special case in which each of the two mixture component factors is of the form $f_j(x) \propto e^{-\lambda_j|x|^{r_j}}$ for some
$\lambda_j,r_j>0$.  This includes the Gaussian case for which
$r_1=r_2=2$ and $\lambda_j = 1/\sigma_j^2$.
(Note that the HAT target in \eqref{eq:robadj1}  requires the existence of second derivatives about the mode points, corresponding to $r_j \ge 2$.)

As in \cite{atchade2011towards} and \cite{roberts2014minimising},
following \cite{predescu2004incomplete} and \cite{kone2005selection},
we assume that the inverse temperatures are given by
$1=\beta_0^{(d)},\beta_1^{(d)},
\ldots,\beta^{(d)}_{k(d)} \approx \beta_{\text{min}}$,
with
\begin{equation}
\label{eqn:beta}
\beta_i = \beta_{i-1} -
\ell(\beta_{i-1})/d^{1/2}
\end{equation}
for some fixed $C^1$ function $\ell$.
In many cases, including the Exponential Power Family case,
the optimal choice of $\ell$ is
$\ell(\beta) = \beta \ell_0$ for a constant $\ell_0 \doteq 2.38$.



We let $\beta_t^{(d)}$ be the inverse temperature at time $t$
for the $d$-dimensional process.
To study weak convergence, we let
$ \beta^{(d)}_{N(dt)}$ be a continuous-time version of the
$\beta_t^{(d)}$ process, speeded up by a factor of $d$, where $\{N(t)\}$
is an independent standard rate~1 Poisson process.
To combine the two modes into one single process, we further augment
this process by multiplying it by $-1$ when the algorithm's state
is closer to the second mode,
while leaving it positive (unchanged) when state is closer
to the first mode. Thus define
\begin{equation}
X_t^{(d)}= (3 - 2I) \, \beta^{(d)}_{N(dt)}
\, .   \label{eq:xtddef}
\end{equation}



\subsection{Main Results}

Our first diffusion limit result (proved in the Appendix), following
\cite{roberts2014minimising}, states that when we are at an inverse
temperature greater than $\beta_{\text{min}}$, the inverse temperature process
behaves identically to the case where there is only one mode (i.e.\ $J=1$).

\begin{theorem}\label{Theorem:diffusion}
Assume the target  $\pi$ is of the form~\eqref{eq:themixer},
with $J=2$ modes of weights $w_1=p$ and $w_2=1-p$, with inverse
weights chosen as in~\eqref{eqn:beta}.
Then up until the first time the process $X^{(d)}$ hits $\pm \beta_{\text{min}}$,
as $d\to\infty$, 
$\{X_t^{(d)}\}$ converges weakly 
to a fixed diffusion process $X$ given by \eqref{eq:xtddef}.
\end{theorem}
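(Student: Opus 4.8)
The plan is to exploit the fact that, \emph{before} the process reaches $\pm\beta_{\text{min}}$, the mode the chain occupies cannot change, so that the bimodal problem collapses onto the single-mode analysis of \cite{roberts2014minimising}. First I would introduce the stopping time $\tau^{(d)} = \inf\{t \ge 0 : |X_t^{(d)}| = \beta_{\text{min}}\}$ and note that, by the definition of the simplified dynamics, inter-modal jumps are permitted only at the hottest level $\beta_{\text{min}}$. Hence on $\{t < \tau^{(d)}\}$ the indicator $I$ is frozen at its initial value $I_0$, and $X_t^{(d)} = (3-2I_0)\,\beta^{(d)}_{N(dt)}$ is simply the Poissonised, $d$-speeded-up inverse-temperature process carrying a fixed sign. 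It therefore suffices to establish the diffusion limit of $\beta^{(d)}_{N(dt)}$ on $[0,\tau^{(d)})$ and multiply by the frozen sign.

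The second step is the reduction to a single mode. Conditioned on the frozen mode $j$, the immediate within-temperature mixing means the state law is $g_j^\beta$, and the temperature-swap acceptance reduces to $\min(1, g_j^{\beta'}(x)/g_j^{\beta}(x))$, with the mixing weight $w_j$ cancelling identically between numerator and denominator. This cancellation is exactly where weight preservation is essential: in the representation $\pi_\beta(x)\propto p\,g_1^\beta(x) + (1-p)\,g_2^\beta(x)$ the weights $p$ and $1-p$ are constant in $\beta$, whereas under standard tempering (Theorem~\ref{Theorem:equivalence}(a)) they would be $\beta$-dependent and would inject an extra drift coupling the two modes. Consequently the inverse-temperature dynamics in mode $j$ are identical to those of a single-mode i.i.d.\ product target $g_j(x)=\prod_{i=1}^d f_j(x_i)$.

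Given this reduction I would invoke the single-mode diffusion limit. Writing the log-acceptance ratio of a proposed move $\beta \to \beta \pm \ell(\beta)/d^{1/2}$ as $\pm(\ell(\beta)/d^{1/2})\sum_{i=1}^d \log f_j(x_i)$ plus the increment of the per-mode log-normaliser, the $O(d^{-1/2})$ spacing from \eqref{eqn:beta} together with a central limit theorem for $\sum_{i=1}^d \log f_j(x_i)$ under $g_j^\beta$ renders this ratio $O(1)$ and asymptotically Gaussian, so the accepted jumps have a non-degenerate limiting law. I would then compute the generator of the speeded-up jump process on a core of smooth test functions, Taylor-expand to second order, and pass to $d\to\infty$; the normaliser increment cancels the leading mean, and the residual drift and diffusivity are governed by $I(\beta)=\text{Var}_{\pi^\beta}(\log f(x))$ through the factor $2\Phi(-\ell\sqrt{I(\beta)}/2)$, exactly as in \cite{atchade2011towards} and \cite{roberts2014minimising}. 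This yields weak convergence of $\beta^{(d)}_{N(dt)}$ to the claimed one-dimensional diffusion.

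Finally I would assemble the pieces: multiplying by the frozen sign $(3-2I_0)$ gives weak convergence of $X^{(d)}$ to the diffusion $X$ of \eqref{eq:xtddef} on the stochastic interval $[0,\tau^{(d)})$, which I would make rigorous by proving convergence of the stopped processes $X^{(d)}_{t\wedge\tau^{(d)}}$ and appealing to a localisation argument for weak convergence up to a hitting time. I expect the main obstacle to be the generator-convergence step of the third paragraph: one must verify the central limit theorem for the sums $\sum_{i} \log f_j(x_i)$ and the expansion of the per-mode normalisers \emph{uniformly} in $\beta$ on compact subsets bounded away from $\beta_{\text{min}}$, and establish the accompanying tightness, so that the localised limit identifies the stopped diffusion cleanly at the boundary $\pm\beta_{\text{min}}$ (where the separate analysis of Theorem~\ref{Theorem:skewBM} then takes over).
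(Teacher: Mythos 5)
Your proposal is correct and follows essentially the same route as the paper: the paper's proof likewise observes that inter-modal jumps are impossible before the process hits $\pm\beta_{\text{min}}$, so the dynamics reduce to the single-mode case, and then invokes Theorem~6 of \cite{roberts2014minimising} directly, extending to $X_t<0$ by reversing the sign of the drift. Your additional sketch of the generator/CLT derivation merely re-derives the cited single-mode result rather than departing from the paper's argument.
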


Theorem \ref{Theorem:diffusion} described what happens away from $\beta_{min}$. However it says nothing about what happens at
$\beta_{\text{min}}$. Moreover its statespace $[-1,-\beta_{\text{min}})\cup
(\beta_{\text{min}}, 1]$
 is not connected, and we have not even properly defined $h$ at 
$\pm \beta _{\text{min}}$.  To resolve these issues we define
$$
h(x) = \begin{cases}
\int_{\beta _{\text{min}}}^x {1\over \ell (u)} du, &\hbox{ when }  x>0\\
-\int_{\beta _{\text{min}}}^{-x}
{1\over \ell (u)} du,  &\hbox{ when }  x<0\\
0,  &\hbox{ when }  x=0\\
\end{cases}
$$
and set $H_t= h(X_t)$, thus making the process $H$ continuous at~0.

\begin{remark}
\label{Remark:Uniform}
The process $H$
 
leaves constant densities locally invariant, ${\tilde G}^*g(v)=0$ for all
$v\neq 0$ where ${\tilde G}^*$ is the adjoint of the infinitesimal
generator of $H$, as will be shown in the Appendix.
This suggests that the density of the
invariant distribution of $H$ (if it exists)
should be piecewise uniform, i.e.\ it should be
constant for $v>0$ and also constant for $v<0$ though
these two constants might not be equal.
\end{remark}


To make further progress, we require a {\it proportionality condition}.
Namely, we assume that the quantities corresponding to
$I(\beta)=\text{Var}_{\pi^\beta}\big( \log f(x) \big)$ are proportional to each
other in the two modes.  More precisely,
we extend the definition of $I$
to $I(\beta) = \text{Var}_{x\sim f_1^\beta}(\log f_1(x))$ for $\beta>0$
(corresponding to the first mode),
and $I(\beta) = \text{Var}_{x\sim f_2^{|\beta|}}(\log f_2(x))$ for $\beta<0$
(corresponding to the second mode),
and assume there is a fixed function $I_0:\mathbb{R}_{+}\to\mathbb{R}_{+}$ and positive
constants $r_1$ and $r_2$ such that we have
$I(\beta) = I_0(\beta)/r_1$ for $\beta>0$ (in the first mode), while
$I(\beta) = I_0(|\beta|)/r_2$ for $\beta<0$ (in the second mode).
For example,
it follows from Section~2.4 of \cite{atchade2011towards} that
in the Exponential Power Family case,
$I(\beta) = 1/r_1\beta^2$ for $\beta>0$ and
$I(\beta) = 1/r_2\beta^2$ for $\beta<0$, so that this proportionality
condition holds in that case.

Corresponding to this, we choose
the inverse temperature spacing function as follows (following
\cite{atchade2011towards} and \cite{roberts2014minimising}):
\begin{equation}
\label{eqn:ell}
\ell(\beta) \ = \ I_0^{-1/2}(\beta) \, \ell_0
\end{equation}
for some fixed constant $\ell_0>0$.

To state our next result, we require the notion of {\it skew Brownian
motion}, a generalisation of usual Brownian motion.  Informally, this
is a process that behaves just like a Brownian motion, except that the
sign of each excursion from~0 is chosen using an independent Bernoulli
random variable; for further details and
constructions and discussion see e.g.\ \cite{Lejay2006}.  
We also require the function
$$
z(h)
\ = \ h\, \left[ 2 \, \Phi\left( {- \ell_0 \over 2 \sqrt{r(h)}}
                                \right) \right]^{-1/2}
\, .
$$
where $r(h)=r_1$ for $h>0$ and $r(h)=r_2$ for $h<0$.
We then have the following result (also proved in the Appendix).

\begin{theorem}\label{Theorem:skewBM}
Under the set-up and assumptions of Theorem~\ref{Theorem:diffusion},
assuming the above proportionality condition and the choice~\eqref{eqn:ell},
then as $d\to\infty$, 
the process $\{X_t^{(d)}\}$ converges weakly in the
Skorokhod topology to a limit process $X$.
Furthermore, the limit process has the property that if
$$
Z_t = z\big( h(X_t) \big)
\, ,
$$
then $Z$ is skew Brownian motion $B^*_t$ with reflection at
\begin{equation}
\label{eq:bdsSBM}
(3-2i) \left[ 2 \, \Phi\left( {- \ell_0 \over 2 \sqrt{r_i}}
                                \right) \right]^{-1/2}
\int_{\beta_{\text{min}}}^1{1\over \ell (u) }du, \ \ \ i=1, 2\ .
\end{equation}
\end{theorem}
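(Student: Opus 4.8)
The plan is to establish the result by pushing the combined inverse-temperature process through the two successive changes of variable $X_t \mapsto H_t = h(X_t) \mapsto Z_t = z(H_t)$ and identifying the weak limit of $Z^{(d)}_t = z(h(X^{(d)}_t))$ as a reflected skew Brownian motion through its martingale problem. Theorem~\ref{Theorem:diffusion} already supplies the hard analytic input away from the hot state: for $|X|>\beta_{\text{min}}$ (equivalently $H\neq 0$) the process stays inside a single mode and converges to the one-mode diffusion of \cite{roberts2014minimising}, whose generator, scale and speed are known. What remains is to (i) verify that the two changes of variable turn this diffusion into a standard Brownian motion on each side of the origin, (ii) read off the reflecting barriers \eqref{eq:bdsSBM} from the image of the cold state, and (iii) analyse the glued hot state $H=0$, where the mode-switching occurs and the skewness is generated.

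First I would treat the interior. By construction $h'(x)=1/\ell(x)$ is (up to a constant) the scale density of the one-mode diffusion, so $H=h(X)$ is on natural scale and hence driftless; a standard It\^o/scale-change computation then gives that, up to the universal constant absorbed into the time speed-up, the diffusion coefficient of $H$ equals the limiting swap-acceptance rate $2\Phi\!\big(-\ell(\beta)\sqrt{I(\beta)}/2\big)$. Under the proportionality condition together with the choice \eqref{eqn:ell} one has $\ell(\beta)\sqrt{I(\beta)}=\ell_0/\sqrt{r_i}$ in mode $i$, so this coefficient collapses to the constant $2\Phi(-\ell_0/(2\sqrt{r_i}))$ on each side of $0$; thus $H$ is a Brownian motion run at a constant but side-dependent speed. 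The piecewise-linear map $z$, with slope $[2\Phi(-\ell_0/(2\sqrt{r_i}))]^{-1/2}$ in mode $i$, rescales these speeds to $1$, so $Z$ is a standard Brownian motion on each of $\{Z>0\}$ and $\{Z<0\}$. Evaluating $z\circ h$ at the cold state $\beta=1$, where $h(1)=\int_{\beta_{\text{min}}}^1 du/\ell(u)$ and $h(-1)=-h(1)$, returns exactly the two numbers in \eqref{eq:bdsSBM}; these are the reflecting barriers inherited from the fact that the temperature random walk cannot propose a level colder than $\beta_0=1$.

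Next I would prove tightness of $\{Z^{(d)}\}$ in the Skorokhod space: compact containment is immediate from the reflecting barriers, and an Aldous-type modulus-of-continuity estimate, held uniform across the origin, rules out pathological oscillation, so it then suffices to identify the limit. I would do this through the martingale problem for reflected skew Brownian motion, whose generator is $\tfrac12 f''$ acting on functions that are $C^2$ on each side, satisfy Neumann conditions $f'=0$ at the two barriers of \eqref{eq:bdsSBM}, and obey a skew gluing condition $\alpha f'(0+)=(1-\alpha)f'(0-)$ at the origin. For such $f$ one shows $f(Z^{(d)}_t)-\int_0^t \tfrac12 f''(Z^{(d)}_s)\,ds$ is an asymptotic martingale: away from $0$ this is Theorem~\ref{Theorem:diffusion}, at the barriers it is the reflection bookkeeping for rejected out-of-range proposals, and at the origin it follows from a direct computation of the discrete generator of the mode-switching step. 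That computation also pins down $\alpha$: re-randomisation into mode $1$ with probability $p$, combined with the side-dependent step scales, forces the reflected limit to have stationary mode-masses $p$ and $1-p$ (as it must, being a deterministic time-change and reparametrisation of the original process), consistent with the piecewise-uniform invariant density of Remark~\ref{Remark:Uniform}; writing $b_1,b_2$ for the moduli of the barriers in \eqref{eq:bdsSBM}, this is the relation $\alpha b_1/[(1-\alpha)b_2]=p/(1-p)$, which reduces to $\alpha=p$ when $r_1=r_2$. Uniqueness for this reflected-skew martingale problem then upgrades tightness to weak convergence and identifies $Z$ as the stated skew Brownian motion.

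The main obstacle is precisely step (iii), the analysis at the glued hot state, about which Theorem~\ref{Theorem:diffusion} is silent: this is the single point where the two mode-components are joined, where the diffusive scalings of the two modes disagree, and where the skewness is created. Making rigorous the passage from the discrete mode-switch to the continuous skew gluing condition — equivalently, verifying that the kinked test functions yield genuine rather than merely formal asymptotic martingales, and that no mass escapes through oscillation near $0$ — is the delicate part. One natural route is excursion theory at the origin, or an invariance principle for one-point-perturbed random walks converging to skew Brownian motion, adapted here to the variable-speed, reflected setting.
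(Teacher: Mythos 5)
Your proposal follows essentially the same route as the paper: the scale change $h$ followed by the piecewise-linear stretch $z$ to obtain unit-speed Brownian motion on each branch, the barriers read off from $z(h(\pm 1))$, and the identification of the skew gluing at the origin by computing the discrete generator of the mode-switching step on test functions that are $C^2$ on each side with the kinked first-derivative condition $a f'^{+}(0)=b f'^{-}(0)$ --- which is exactly the paper's core $\mathcal{D}$ and its Lemma, packaged there as uniform generator convergence via Ethier--Kurtz rather than as your tightness-plus-martingale-problem formulation. The only substantive nuance is that the paper pins down the excursion weights $a,b$ directly from the cancellation of the $O(d^{1/2})$ first-derivative terms in the Taylor expansion (using the acceptance probabilities $\alpha_\pm=2\Phi(-\ell_0/(2\sqrt{r_i}))$ as well as the step scales), whereas you recover the same $\alpha$ partly by a stationarity consistency argument; both give the same answer.
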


\begin{remark}
It follows from the proof of
Theorem~\ref{Theorem:skewBM} that the specific version of skew
Brownian motion $B^*_t$ that arises in the limit is one with excursion
weights proportional to
\[a = p
\left[ 2\Phi\left( {- \ell_0 \over 2 \sqrt{r_1}} \right) \right]^{1/2}
\text{and~}
b = (1-p)
\left[ 2 \Phi\left( {- \ell_0 \over 2 \sqrt{r_2}} \right) \right]^{1/2}.\]
That means that the stationary density for $B^*_t$ on the positive and
negative values is proportional to $a$ and $b$ respectively.
This might seem surprising since the limiting weights of the modes
should be equal to $p$ and $1-p$, not proportional to $a$ and $b$
(unless $r_1=r_2$).  The explanation is that
the {\it lengths} of the positive and negative parts of
the domain are given by
$\left[ 2 \, \Phi\left( {- \ell_0 \over 2 \sqrt{r_1}} \right) \right]^{1/2}$
and
$\left[ 2 \, \Phi\left( {- \ell_0 \over 2 \sqrt{r_2}} \right) \right]^{1/2}$
respectively.
Hence, the total stationary mass of the positive and negative parts
-- and hence also the limiting modes weights --
are still $p$ and $1-p$ as they should be.
\end{remark}


\subsection{Complexity Order}

Theorems~\ref{Theorem:diffusion}
and~\ref{Theorem:skewBM} have implications for the computational
complexity of our algorithm.

In Theorem~\ref{Theorem:diffusion}, the limiting diffusion process $H_t$
is a fixed process, not depending on dimension except through the value
of $\beta_{\text{min}}$.  It follows that if $\beta_{\text{min}}$ is kept fixed, then
$H_t$ reaches 0 (and hence mixes modes) in time $O(1)$.  Since $H_t$ is
derived (via $X_t$) from the $\beta_t$ process speeded up by a factor of
$d$, it thus follows that for fixed $\beta_{\text{min}}$, $\beta_t$ reaches
$\beta_{\text{min}}$ (and hence mixes modes) in time $O(d)$.  So, if
$\beta_{\text{min}}$ is kept fixed, then the mixing time of the
weight-preserving tempering algorithm is $O(d)$, which is very fast.
However, this does not take into account the dependence on
$\beta_{\text{min}}$, which might also change as a function of $d$.

Theorem~\ref{Theorem:skewBM} allows for control of the dependence of
mixing time on the values of $\beta_{\text{min}}$.
The limiting skew Brownian motion process $B^*_t$
is a fixed process, not depending on dimension nor on $\beta_{\text{min}}$, with range given by the reflection points in (\ref{eq:bdsSBM}).
It follows that $Z_t$ reaches 0 (and hence mixes modes)
in time of order the square of the total length of the interval,
i.e.\ of order
$$
\left( \sum_{i=1}^2\left[ 2 \, \Phi\left( {- \ell_0 \over 2 \sqrt{r_i}}
                                \right) \right]^{-1/2}
\int_{\beta_{min}}^1{1\over \ell (u) }du \right)^2
$$
In the Exponential Power Family case, this is easily computed to be
$O\big(d \, [\log \beta_{\text{min}}]^2\big)$.

This raises the question of how large $\beta_{\text{min}}$ needs to be, as a
function of dimension $d$.  If
the proposal scaling is optimal for within
each mode at the cold temperature, then the proposal scaling is
$O(d^{-1/2})$.  Then, at an inverse temperature $\beta$,
the proposal scaling is $O((\beta d)^{-1/2})$.
Hence, at an inverse temperature $\beta$, the probability
of jumping from one mode to the other (a distance $O(\sqrt{d})$ away) is
roughly of order $e^{-\beta d^2}$.  This is exponentially small
unless $\beta = O(1/d^2)$.
This indicates that, for our algorithm to perform well, we
need to choose $\beta_{\text{min}} = O(1/d^2)$.
With this choice, the mixing time order becomes 
$$
\left( \sum_{i=1}^2\left[ 2 \, \Phi\left( {- \ell_0 \over 2 \sqrt{r_i}}
                                \right) \right]^{-1/2}
\int_{1/d^2}^1{1\over \ell (u) }du \right)^2
$$

In the Exponential Power Family case, this corresponds to $O\big(d \,
[\log d]^2\big)$.  That is, for the inverse temperature process to hit
$\beta_{\text{min}}$ and hence mix modes, takes $O\big(d \, [\log d]^2\big)$
iterations. This is a fairly modest complexity order, and compares very
favourably to the exponentially large convergence times which arise for
traditional simulated tempering as discussed in
Subsection~\ref{subsec:torpid}.

\subsection{More than Two Modes}

Finally, we note that
for simplicity, the above analysis was all done for just {\it two} modes.
However, a similar analysis works more generally.  Indeed, suppose now
that we have $k$ modes, of general weights $p_1,p_2,\ldots,p_k \ge 0$ with
$\sum_i p_i = 1$.  Then when $\beta$ gets to $\beta_{\text{min}}$, the process
chooses one of the $k$ modes with probability $p_i$.  This corresponds to
$\{Y_t\}$ being replaced by a Brownian motion not on $[-1,1]$, but rather
on a ``star'' shape with $k$ different length-1 line segments all meeting
at the origin (corresponding, in the original scaling, to $\beta_{\text{min}}$),
where each time the Brownian motion hits the origin it chooses one of the
$k$ line segments with probability $p_i$ each.  This process is called
{\it Walsh's Brownian motion}, see e.g.\ \cite{barlow1989}.  (The case
$k=2$ but $p_1\not=1/2$ corresponds to skew Brownian motion as above.)
For this generalised process, a theorem similar to
Theorem~\ref{Theorem:diffusion} can be then stated and proved by similar
methods, leading to the same complexity bound
of $O\big(d \, [\log d]^2\big)$ iterations in the multimodal case as well.

\section{Conclusion and Further Work}
\label{sec:conc}

This article has introduced the HAT algorithm to mitigate the lack of
regional weight preservation in standard power-based tempered targets.
Our simulation studies show promising mixing results, and our theorems
indicate the mixing times can become polynomial rather than exponential
functions of the dimension $d$, and indeed of time
$O(d[\log d]^2)$ under appropriate assumptions.

Various questions remain to make our HAT approach
more practically applicable. The ``modal assignment function'' needs to be specified in an appropriate way, and more exploration into the robustness of the current assignment mechanism is needed to understand its performance on heavier and lighter tailed distributions. The suggested HAT target assumes knowledge of the mode points which typically one will not have to begin with and one would rely on effective  optimisation methods to seek these out either during or prior to the run of the algorithm. Indeed, this has been partially explored by the authors in \cite{ALPSTawn}. The performance of HAT is heavily reliant on the mixing at the hottest temperature level; the use of RWM here can be problematic for HAT where the mode heights of the disperse modes can be far lower than the narrower modes. As such more advanced sampling schemes such as discretised tempered Langevin could give accelerated mixing at the hot state; the effects of which would be transferred to an improvement in the mixing at the coldest state.

In the theoretical analysis of Section~\ref{sec:Theoreticalanalysis},
the spacing between consecutive inverse-temperature levels was taken to
be $O(d^{-1/2})$ to induce a non trivial diffusion limit. However, this
result required strong assumptions.  Accompanying work in
\cite{Tawn2018rwopt} suggests that for the
HAT algorithm under more general conditions, the consecutive optimal
spacing should still be $O(d^{-1/2})$, with an associated optimal acceptance
rate in the interval $[0,0.234]$.

\section{Appendix}
\label{subsec:appendix}

In this Appendix, we prove the theorems stated in the paper.

\subsection{Proof of Theorem~\ref{Theorem:equivalence}}

Herein, assume the mixture distribution setting of \eqref{eq:themixer}
and \eqref{eq:themixerbeta} where the mixture components consist of
multivariate Gaussian distributions i.e.\ $g_j(x) =
\phi(x;\mu_j,\Sigma_j)$.
We prove each of the three parts of Theorem~\ref{Theorem:equivalence}
in turn.

\begin{proof}[Proof of Theorem~\ref{Theorem:equivalence}(a)]
Recall that $h_j(x)= w_j \phi(x;\mu_j,\Sigma_j)$ where $\exists C \in \mathbb{R}$ such that $C\sum_{j=1}^J w_j = 1$. Hence, taking $f_j(x,\beta) = [h_j(x)]^\beta$ gives 
\begin{eqnarray*} 
f_j(x,\beta) &=& w_j^\beta \phi(x;\mu_j,\Sigma_j)^\beta \\
&=& w_j^\beta \left[\frac{\left((2\pi)^d|\Sigma_j|\right)^{\frac{1-\beta}{2}}}{\beta^{d/2}} \right]\phi\left(x;\mu_j,\frac{\Sigma_j}{\beta}\right) \\
&\propto&  w_j^\beta |\Sigma_j|^{\frac{1-\beta}{2}}\phi\left(x;\mu_j,\frac{\Sigma_j}{\beta}\right)
\end{eqnarray*}
\end{proof}


\begin{proof}[Proof of Theorem~\ref{Theorem:equivalence}(b)]
Recall the result of Theorem~\ref{Theorem:equivalence}(a).
To adjust for the weight discrepancy from the cold state target a multiplicative adjustment factor, $\alpha_j(x)$ is used such that 
\[ f_j(x,\beta) = h_j(x)^\beta \alpha_j(x,\beta)\]
where $\alpha_j(x,\beta)= \left(w_j^\beta |\Sigma_j|^{\frac{1-\beta}{2}}\right)^{-1}$. An identical argument to
Theorem~\ref{Theorem:equivalence}(a)
shows that this immediately gives  $W_{(j,\beta)} \propto w_j$.

In a Gaussian setting, up to a proportionality constant
\begin{equation}
w_j \propto h_j(x)\left[(2\pi)^{\frac{d}{2}} |\Sigma_j|^{\frac{1}{2}} \exp\left\{ \frac{1}{2} (x-\mu_j)^T\Sigma_j^{-1}(x-\mu_j)   \right\}\right] \label{eqadjpap1}
\end{equation}
and at any point $x \in \mathbb{R}^d$ 
\begin{eqnarray}
\nabla\log{h_j(x)} &=& -\Sigma_j^{-1}(x-\mu_j) \label{eq:surr1} \\
\nabla^2 \log{h_j(x)}&=& -\Sigma_j^{-1}. \label{eq:surr2}
\end{eqnarray}
Substituting these gradient terms \eqref{eq:surr1} and \eqref{eq:surr2} into \eqref{eqadjpap1} and then using this form of \eqref{eqadjpap1} to create the adjustment factor $\alpha_j(x,\beta)= \left(w_j^\beta |\Sigma_j|^{\frac{1-\beta}{2}}\right)^{-1}$ completes the proof.
\end{proof}


\begin{proof}[Proof of Theorem~\ref{Theorem:equivalence}(c)]\\
Since $h_j(x)= w_j \phi(x;\mu_j,\Sigma_j)$ then
\begin{eqnarray}
f_j(x,\beta) &=& h_j(x)^\beta h_j(\mu_j)^{(1-\beta)} \nonumber\\
&=& w_j \phi(x;\mu_j,\Sigma_j)^\beta \phi(\mu_j;\mu_j,\Sigma_j)^{(1-\beta)} \nonumber \\
&=&  \frac{w_j}{\beta^{d/2}} \phi\left(x;\mu_j,\frac{\Sigma_j}{\beta}\right) \nonumber
\end{eqnarray}
and so  $W_{(j,\beta)} \propto w_j$.
\end{proof}

\begin{remark} It is possible to extend
the weight adjusted target result of Theorem~\ref{Theorem:equivalence}(c)
to a setting where the
target consists of a mixture of a general but common distribution, with
each component having a different shape and scale factor; we plan to
pursue this result elsewhere.
\end{remark}

Since mixing between modes is only possible at $\beta _{min}$, the dynamics will be identical to the single mode case ($J=1$) as covered in \cite{roberts2014minimising}.
It therefore follows directly  from Theorem~6 of
\cite{roberts2014minimising} that as $d\to\infty$,
the process $\{X_t\}$ converges weakly, at least on $X_t>0$,
to a diffusion limit $\{X_t\}_{t \ge 0}$ satisfying
\begin{eqnarray*}
dX_t  &=&  \left[2  \ell^2(X_t)
  \, \Phi\left( {- \ell(X_t) I^{1/2}(X_t) \over 2} \right) \right]^{1/2} dB_t
\big. \\
 && \hspace{-1cm}+ \, \Bigg[\ell(X_t) \, \ell'(X_t)
  \ \Phi \left({-I^{1/2}(X_t) \ell(X_t) \over 2}\right)
\big. \\
 && \hspace{-1cm}- \, \ell^2(X_t) \left({\ell(X_t) I^{1/2}(X_t) \over 2} \right)'
	\phi \left( {-I^{1/2}(X_t) \ell(X_t) \over 2} \right) \Bigg] dt
\, ,
\label{eq:origconv}
\end{eqnarray*}
where $I(\beta) =  \text{Var}_{x\sim f_1^\beta}(\log f_1(x))$.
If we extend the definition of $I$
to $I(\beta) =  \text{Var}_{x\sim f_1^\beta}(\log f_1(x))$ for $\beta>0$,
and $I(\beta) =  \text{Var}_{x\sim f_2^{|\beta|}}(\log f_2(x))$ for $\beta<0$,
so that positive values correspond to the first mode while negative
values correspond to the second mode,
then~\eqref{eq:origconv} also holds for $X_t<0$,
except with the sign of the drift reversed.

\subsection{Proof of Remark~\ref{Remark:Uniform}}

We note that for $x>0$ (with exactly analogous results for $x<0$),
$h'(x) = \ell(x)^{-1}$, and $h''(x) = -\ell'(x) \ell(x)^{-2}$.
So, if we set $H_t = h(X_t)$, then we compute by Ito's Formula that
\begin{eqnarray*}
dH_t
&=& h'(X_t) dX_t + \frac{1}{2}  h''(X_t) d\langle X \rangle_t \\
&=& \ell(X_t)^{-1} dX_t - \frac{1}{2} \ell'(X_t) \ell(X_t)^{-2}
d\langle X \rangle_t \\
&=& \ell(X_t)^{-1}
\left[2  \ell^2(X_t)
   \Phi\left( \frac{- \ell(X_t) I^{1/2}(X_t)}{ 2} \right) \right]^{1/2} dB_t\\
 && +  \ell(X_t)^{-1}  \Bigg[\ell(X_t)  \ell'(X_t)
   \Phi \left(\frac{-I^{1/2}(X_t) \ell(X_t)}{ 2}\right) \\
 && \hspace{-1cm}-  \ell^2(X_t) \left(\frac{\ell(X_t) I^{1/2}(X_t)}{ 2} \right)'
	\phi \left( \frac{-I^{1/2}(X_t) \ell(X_t)}{  2} \right) \Bigg] dt 
	\end{eqnarray*}
	\begin{eqnarray*}
- \frac{1}{2} \ell'(X_t) \ell(X_t)^{-2}
\left[2  \ell^2(X_t)
   \Phi\left( \frac{- \ell(X_t) I^{1/2}(X_t)}{ 2} \right) \right] dt
\end{eqnarray*}
	\begin{eqnarray*}
&=& 
\left[ 2 \Phi\left( \frac{- \ell(X_t) I^{1/2}(X_t)}{ 2} \right)
					\right]^{1/2} dB_t \\
  && +  \Bigg[\ell'(X_t)
  \ \Phi \left(\frac{-I^{1/2}(X_t) \ell(X_t)}{ 2}\right) \\
  && -  \ell(X_t) \left(\frac{\ell(X_t) I^{1/2}(X_t) }{ 2} \right)'
	\phi \left( \frac{-I^{1/2}(X_t) \ell(X_t)}{2} \right) \Bigg] dt 
		\end{eqnarray*}
		\begin{eqnarray*}
~~~~- \frac{1}{2} \ell'(X_t) 
\left[2 \,
  \Phi\left( {- \ell(X_t) I^{1/2}(X_t) \over 2} \right) \right] dt
\end{eqnarray*}
\begin{eqnarray}
&=&
\left[ 2 \, \Phi\left( {- \ell(X_t) I^{1/2}(X_t) \over 2} \right)
					\right]^{1/2} dB_t \nonumber\\
  && - \, \ell(X_t) \left({\ell(X_t) I^{1/2}(X_t) \over 2} \right)'
	\phi \left( {-I^{1/2}(X_t) \ell(X_t) \over 2} \right) dt \nonumber\\
&=&
\left[ 2 \, \Phi\left( {- \ell(X_t) I^{1/2}(X_t) \over 2} \right)
					\right]^{1/2} dB_t \nonumber\\
&& + \ \ell(X_t)
  \, \Bigg[ \Phi \left( {-I^{1/2}(X_t) \ell(X_t) \over 2} \right) \Bigg]' dt
	\label{dHt}
\end{eqnarray}

\noindent Re-writing everything in terms of $H_t = h(X_t)$, this becomes

$$
dH_t
\ = \
\left[ 2 \, \Phi\left( {- \ell(h^{-1}(H_t)) I^{1/2}(h^{-1}(H_t)) \over 2}
	\right) \right]^{1/2} dB_t
$$
\begin{equation}
  + \, \ell(h^{-1}(H_t))
  \, \Bigg[ \Phi \left( {-I^{1/2}(h^{-1}(H_t)) \ell(h^{-1}(H_t)) \over 2}
	\right) \Bigg]' dt
\, .
\label{eq:Hdef}
\end{equation}
Now, in general, a diffusion of the form $dH_t = \sigma(H_t) dB_t +
\mu(H_t) dt$ has locally invariant distribution $\pi$ provided that
$\frac{1}{2} (\log\pi)' \sigma^2 + \sigma \sigma' = \mu$.  In particular, it
has a {\it uniform} locally ivariant distribution , i.e.\ with $\pi$ constant,
provided that $\mu = \sigma \sigma'$, i.e.\ that $2 \mu = (\sigma^2)'$.
In this specific case, we verify that
$$
(\sigma^2)'
\ = \ {d \over dH} \left[ 2 \, \Phi\left( {- \ell(h^{-1}(H))
I^{1/2}(h^{-1}(H)) \over 2} 
        \right) \right]
$$
$$
\ = \ \left( {dH \over dX} \right)^{-1}
{d \over dX} \left[ 2 \, \Phi\left( {- \ell(X) I^{1/2}(X) \over 2} 
        \right) \right]
$$
$$
\ = \ \left( \ell(X)^{-1} \right)^{-1}
\left[ 2 \, \Phi\left( {- \ell(X) I^{1/2}(X) \over 2} 
        \right) \right]'
$$
which is indeed equal to $2\mu$ since in the above equation
$$
\mu \ = \ \ell(X)
  \, \Bigg[ \Phi \left( {-I^{1/2}(X) \ell(X) \over 2} \right) \Bigg]'
\, .
$$
Therefore $H$ leaves constant densities locally invariant.

\subsection{Proof of Theorem~\ref{Theorem:skewBM}}

We now assume that
$I(\beta) = I_0(\beta)/r_1$ for $\beta>0$, while
$I(\beta) = I_0(|\beta|)/r_2$ for $\beta<0$, and that
$\ell(\beta) = I_0^{-1/2}(\beta) \, \ell_0$.
This makes $~\ell(x) I^{1/2}(x) = \ell_0/\sqrt{r_1}~$ for $x>0$,
and $\ell(x) I^{1/2}(x)=\ell_0/\sqrt{r_2}$ for $x<0$.  In either
case, $\ell(x) I^{1/2}(x)$ is
constant, i.e.\ has derivative zero.  That in turn collapses~\eqref{dHt},
at least for $H_t \not= 0$, into the simpler
$$
dH_t
\ = \
\left[ 2 \, \Phi\left( {- \ell_0 \over 2 \sqrt{r(H_t)}}
				\right) \right]^{1/2} dB_t
\, ,
$$
where $r(H)=r_1$ for $H>0$ and $r(H)=r_2$ for $H<0$.

Finally, we set
$$
Z_t
\ = \ H_t \, \left[ 2 \, \Phi\left( {- \ell_0 \over 2 \sqrt{r(H_t)}}
                                \right) \right]^{-1/2}
.
$$
That is, $Z_t$ is a version of $H_t$ which is stretched by
a piecewise linear spatial function, which is linear on each of the
positive and negative values respectively.
It then follows immediately from the above
that $dZ_t = dB_t$, i.e.\ that $Z_t$
behaves like Brownian motion on each of its two branches (positive and
negative).
It remains only to prove that at $Z_t=0$, the convergence still holds.

We complete the proof similarly to previous proofs of diffusion limits of
MCMC algorithms (e.g.\ \cite{roberts1997weak, roberts1998optimal,
beda:rose:2008}), following the approach indicated in Chapter~4 of
\cite{ethi:kurt:1986} (in particular Corollary 8.7 of that chapter), by proving that the {\it generator} $G^{(d)}$ of the
original process under these combined transformations (i.e., jumping according to a rate $d$ Poisson process, then transformed by the $h$ function, and then stretched by the piecewise linear function) converges uniformly to the generator $G^*$ of skew Brownian motion, when applied to a core of functionals.

Let  $z_{\text{max}}=\int_{\beta_{\text{min}}}^1\ell^{-1}(u)\times
2\Phi(-\ell_0/(2\sqrt{r_1})$
and let $z_{\text{min}}=-\int_{\beta_{\text{min}}}^1\ell^{-1}(u)\times
2\Phi(-\ell_0/(2\sqrt{r_2})$.
We let $\mathcal{D}$ be the set of all functions $f:[-z_{\text{min}},z_{\text{max}}]\to \mathbb{R}$ which
are continuous and twice-continuously-differentiable on
$[z_{\text{min}},0]$ and also on $[0,z_{\text{max}}]$,
with matching one-sided second derivatives $f''^+(0)=f''^-(0)$,
and skewed one-sided first derivatives satisfying
$a f'^+(0)=bf'^-(0)$ where
$a = p
\left[ 2 \, \Phi\left( {- \ell_0 \over 2 \sqrt{r_1}} \right) \right]^{1/2}$
and
$b = (1-p)
\left[ 2 \, \Phi\left( {- \ell_0 \over 2 \sqrt{r_2}} \right) \right]^{1/2}$.
Finally we require that $f'(z_{\text{max}})=f'(z_{\text{min}})=0$ to describe the reflecting boundaries at the endpoints.
Thus, $C^2$ functions are not contained in $\mathcal{D}$ due the enforced
discontinuity of the first derivative at $0$,
but e.g.\ $f \in \mathcal{D}$ if $f(x) = x^2 + a x \mathbbm{1}_{x<0} + b x \mathbbm{1}_{x>0}$.
In particular, $\mathcal{D}$ is dense (in the sup norm) in the set of all
$C^2[z_{\text{min}},z_{\text{max}}]$ functions, so in the language of
\cite{ethi:kurt:1986}, $\mathcal{D}$ serves as a core of functions for which
it suffices to prove that the generators converge.
Furthermore, it follows from e.g.\ \cite{liggett:2010}
and Exercise~1.23 of Chapter~VII of \cite{revuz:yor:2004})
that the generator of skew Brownian motion (with excursion weights
proportional to $a$ and $b$ respectively) satisfies that
$G^*f(x) = \frac{1}{2} f''(0)$ for all $f\in \mathcal{D}$, using the convention that
$f''(0)$ represents the common value $f''^+(0) = f''^-(0)$.



Now, it follows from the previous discussion that for any
fixed $f\in \mathcal{D}$,
\begin{equation}
\lim_{d\to\infty}
\sup_{z \in \left[ z_{\text{min}},  z_{\text{max}} \right] \setminus \{0\}}
|G^{(d)}f(z) - G^* f(z)| = 0
\, .
\label{eq:genconv}
\end{equation}
That is, the generators do converge uniformly to the $G^*$, as required,
at least for $z \not= 0$, i.e.\ avoiding the mode-hopping value $\beta_{\text{min}}$.
To complete the proof, it suffices to prove that
\eqref{eq:genconv} also holds at $z=0$, i.e.\ to prove:

\begin{lemma}
We have that:
$$
\lim_{d\to\infty} G^{(d)}f(0) \ = \ G^* f(0)
\ = \ \, \frac{1}{2} \, f''(0)
\, .
$$
\end{lemma}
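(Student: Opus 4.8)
The plan is to evaluate $G^{(d)}f(0)$ directly from the definition of the generator as the instantaneous mean rate of change of $f(Z_t^{(d)})$ when the chain sits at the mode-hopping value $z=0$ (equivalently $\beta=\beta_{\text{min}}$), and to verify it converges to $\tfrac12 f''(0)=G^*f(0)$. Since $Z^{(d)}$ is the $d$-speeded inverse-temperature process pushed through $h$ and then through the piecewise-linear stretch $z(\cdot)$, proposed moves arrive at rate $d$, and at $\beta_{\text{min}}$ the only admissible swap is to the adjacent colder level $\beta_{\text{min}}+\ell(\beta_{\text{min}})/d^{1/2}$. Because the within-mode mixing is immediate and weight-preserving, each departure from $\beta_{\text{min}}$ enters mode~$1$ with probability $p$ and mode~$2$ with probability $1-p$; a swap into mode $i$ is accepted with probability $A_i^{(d)}$ and, in the $Z$-coordinate, produces a step of size $\delta_i=c_i/d^{1/2}$ to the right ($i=1$) or left ($i=2$), where $c_i=[2\Phi(-\ell_0/(2\sqrt{r_i}))]^{-1/2}$.

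First I would record the two facts that drive the computation. The optimal-scaling analysis already used for $z\neq0$ gives $A_i^{(d)}\to 2\Phi(-\ell_0/(2\sqrt{r_i}))=c_i^{-2}$, so that $A_i^{(d)}c_i^2\to1$; and, by the definitions of $a$ and $b$, one has $a=p\,c_1^{-1}$ and $b=(1-p)c_2^{-1}$, so the skew condition $af'^+(0)=bf'^-(0)$ defining the core $\mathcal D$ reads $p\,c_1^{-1}f'^+(0)=(1-p)c_2^{-1}f'^-(0)$. Taylor-expanding $f$ separately on each side of $0$ with its one-sided derivatives and assembling the contributions yields
\[
G^{(d)}f(0)=d\left[p A_1^{(d)}\bigl(f(\delta_1)-f(0)\bigr)+(1-p)A_2^{(d)}\bigl(f(-\delta_2)-f(0)\bigr)\right].
\]
The first-order part equals $d^{1/2}\bigl[p A_1^{(d)}c_1 f'^+(0)-(1-p)A_2^{(d)}c_2 f'^-(0)\bigr]$; using $A_i^{(d)}\to c_i^{-2}$ the bracket tends to $p\,c_1^{-1}f'^+(0)-(1-p)c_2^{-1}f'^-(0)$, which is identically zero on $\mathcal D$. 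The surviving second-order part is $\tfrac12 f''(0)\bigl[p A_1^{(d)}c_1^2+(1-p)A_2^{(d)}c_2^2\bigr]$, and since $A_i^{(d)}c_i^2\to1$ and $p+(1-p)=1$ this converges to $\tfrac12 f''(0)$, as required.

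The main obstacle is that the first-order contribution is $O(d^{1/2})$, so its vanishing must be established to $o(1)$ and not merely to leading order. This forces two things to be pinned down: the precise boundary convention at $\beta_{\text{min}}$ (the would-be hotter proposals must reflect so that colder departures occur at the full rate $d$ rather than $d/2$, matching the interior normalisation that makes the limit $\tfrac12 f''$), and the $O(d^{-1/2})$ corrections to $A_i^{(d)}$ that generate the drift away from $0$. I would expand the acceptance probabilities to order $d^{-1/2}$ and check that the residual first-order terms are again proportional to the combination $p\,c_1^{-1}f'^+(0)-(1-p)c_2^{-1}f'^-(0)$ annihilated by the skew condition --- this is exactly the role the piecewise-linear stretch plays in killing the drift for $z\neq0$ --- while bounding the Taylor remainders uniformly over $\mathcal D$ so that \eqref{eq:genconv} extends to $z=0$. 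With the generator convergence then holding on all of $\mathcal D$, Corollary~8.7 of Chapter~4 of \cite{ethi:kurt:1986} upgrades it to the asserted weak convergence of $Z^{(d)}$ to skew Brownian motion.
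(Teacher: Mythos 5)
Your proposal is correct and follows essentially the same route as the paper: the same rate-$d$ generator decomposition at $z=0$ with proposed steps of size $c_i d^{-1/2}$ where $c_i=[2\Phi(-\ell_0/(2\sqrt{r_i}))]^{-1/2}$, the same identification $\alpha_\pm = 2\Phi(-\ell_0/(2\sqrt{r_i}))=c_i^{-2}$, cancellation of the $O(d^{1/2})$ first-order Taylor terms via the skew condition $af'^+(0)=bf'^-(0)$, and the surviving second-order terms combining to $\tfrac{1}{2}f''(0)$. The refinement you flag --- controlling the $O(d^{-1/2})$ corrections to the acceptance probabilities so that the first-order cancellation holds to $o(1)$ rather than merely to leading order --- is sidestepped in the paper, which pins down $\alpha_\pm$ exactly as the limiting constants via the expected-squared-jumping-distance identity instead of carrying out the further expansion you propose.
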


\begin{proof}
Note first that if the original inverse-temperature
process proposes to move the
inverse-temperature from $\beta_{\text{min}}$ to $\beta_{\text{min}} +
\ell(\beta_{\text{min}}) d^{-1/2}$,
then the $H_t$ process proposes to move from $0$ to $\pm d^{-1/2}$,
and the $Z_t$ process proposes to move from $0$ to $\pm d^{-1/2}
\left[ 2 \, \Phi\left( \frac{- \ell_0 }{ 2 \sqrt{r(\pm)}}
                                \right) \right]^{-1/2}$.
Furthermore, the $Z_t$ process, like the $X_t$ process, is sped up by
a factor of $d$, which multiplies its generator by $d$.  Hence, we
conclude that
$$
G^{(d)}f(0)
\ = \ d \left( p \alpha_+
\left[f \left(d^{-1/2} \left[ 2 \, \Phi\left( {- \ell_0 \over 2 \sqrt{r_1}}
                                \right) \right]^{-1/2} \right) - f(0) \right]
\right.
$$
$$
\, + \,
\left.
(1-p) \alpha_-
\left[f \left(d^{-1/2} \left[ 2 \, \Phi\left( {- \ell_0 \over 2 \sqrt{r_2}}
                                \right) \right]^{-1/2} \right) - f(0) \right]
\right)
\, ,
$$
where $\alpha_+$ is the acceptance probability for the original
process to accept a proposal
to increase the inverse-temperature from $\beta_{\text{min}}$ to
$\beta_{\text{min}}+\ell(\beta_{\text{min}}) d^{-1/2}$ in mode~1,
and $\alpha_-$ is the acceptance probability for
the same proposal in mode~2.



Next, note that the process $Z_t$ has expected squared jumping
distance equal to the square of its volatility, which is just equal to~1.
On the other hand, the expected squared jumping
distance must be equal to the squared
distance of its proposed move times the acceptance probability.
Hence, in mode~1, we must have
$$
1
\ = \ 
\left(
\left[ 2 \, \Phi\left( {- \ell_0 \over 2 \sqrt{r_1}}
                                \right) \right]^{-1/2}
\right)^2
\alpha_+
$$
whence
$$
\alpha_+ \ = \
2 \, \Phi\left( {- \ell_0 \over 2 \sqrt{r_1}} \right)
$$
and similarly
$$
\alpha_- \ = \
2 \, \Phi\left( {- \ell_0 \over 2 \sqrt{r_2}} \right)
\, .
$$

Then, taking a Taylor series expansion, we obtain that for $f \in \mathcal{D}$,
\begin{eqnarray*}
G^{(d)}f(0)
&=& d \Bigg( p \alpha_+
\left[f \left(d^{-1/2} \left[ 2 \Phi\left( {- \ell_0 \over 2 \sqrt{r_1}}
                                \right) \right]^{-1/2} \right) - f(0) \right]
 \\
&& \hspace{-2cm}+
(1-p) \alpha_-
\left[f \left( - d^{-1/2} \left[ 2 \, \Phi\left( {- \ell_0 \over 2 \sqrt{r_2}}
                                \right) \right]^{-1/2} \right) - f(0) \right]
\Bigg) 
\end{eqnarray*}
\begin{eqnarray*}
&=& d p
\left[ 2 \, \Phi\left( {- \ell_0 \over 2 \sqrt{r_1}} \right) \right]
\left(d^{-1/2} \left[ 2 \, \Phi\left( {- \ell_0 \over 2 \sqrt{r_1}}
                                \right) \right]^{-1/2} \right)
f'^+(0)\\
&& \hspace{-1cm}- \ \frac{1}{2} d p
\left[ 2 \, \Phi\left( {- \ell_0 \over 2 \sqrt{r_1}} \right) \right]
\left(d^{-1/2} \left[ 2 \, \Phi\left( {- \ell_0 \over 2 \sqrt{r_1}}
                                \right) \right]^{-1/2} \right)^2
f''^+(0) \\
&&+ O(d p \alpha_+ d^{-3/2})
\end{eqnarray*}
\begin{eqnarray*}
&+& \Bigg[d (1-p)
\left[ 2 \, \Phi\left( {- \ell_0 \over 2 \sqrt{r_2}} \right) \right]
\\
&& \times \left(d^{-1/2} \left[ 2 \, \Phi\left( {- \ell_0 \over 2 \sqrt{r_2}}
                                \right) \right]^{-1/2} \right)
f'^-(0) \Bigg]\\
&+&  \Bigg[\frac{1}{2} d (1-p)
\left[ 2 \, \Phi\left( {- \ell_0 \over 2 \sqrt{r_2}} \right) \right]
\\
&& \times\left(d^{-1/2} \left[ 2 \, \Phi\left( {- \ell_0 \over 2 \sqrt{r_2}}
                                \right) \right]^{-1/2} \right)^2
f''^-(0)\Bigg] \\
&+& O(d p \alpha_+ d^{-3/2})
\end{eqnarray*}
\begin{eqnarray*}
&=& d^{1/2} p
\left[ 2 \, \Phi\left( \frac{- \ell_0 }{ 2 \sqrt{r_1}} \right) \right]^{1/2}
f'^+(0) + \frac{1}{2} p f''^+(0)\\
&& - d^{1/2} (1-p)
\left[ 2 \, \Phi\left( \frac{- \ell_0}{ 2 \sqrt{r_2}} \right) \right]^{1/2}
f'^-(0)\\
&& +  \frac{1}{2} (1-p)
f''^-(0) + O(d^{-1/2}).
\end{eqnarray*}

Then, by definition of $f\in \mathcal{D}$, the terms involving $f'^+(0)$ and
$f'^-(0)$ cancel, and the terms
involving $f''^+(0)$ and $f''^-(0)$ combine.  Recalling the convention
$f''(0) = f''^+(0) = f''^-(0)$, we obtain finally that
$$
G^{(d)}f(0)
\ = \
\frac{1}{2}f''(0) + O(d^{-1/2})
$$
so that
$$
\lim_{d\to\infty}
G^{(d)}f(0)
\ = \
\frac{1}{2} f''(0)
\ = \
G^*f(0)
$$
as claimed.
\end{proof}

\bibliographystyle{spbasic}      
\bibliography{biblisim}   

%
%

\end{document}